\documentclass[a4paper,USenglish,cleveref,autoref]{lipics-v2021}
\usepackage{booktabs}
\usepackage{xcolor}

\usepackage{multirow}
\usepackage{xurl}
\usepackage{tikz}
\usetikzlibrary{positioning,decorations.pathreplacing}
\hideLIPIcs
\nolinenumbers

\title{Taxonomic Classification with Tag Arrays}
\titlerunning{Taxonomic Classification with Tag Arrays}

\author{Travis Gagie}{CeBiB -- Center for Biotechnology and Bioengineering, Chile}{travis.gagie@gmail.com}{https://orcid.org/0000-0003-3689-327X}{}
\author{Gonzalo Navarro}{Department of Computer Science, University of Chile, Santiago, Chile \and CeBiB -- Center for Biotechnology and Bioengineering, Chile}{gnavarro@dcc.uchile.cl}{https://orcid.org/0000-0002-2286-741X}{}

\authorrunning{T. Gagie and G. Navarro}
\Copyright{Travis Gagie and Gonzalo Navarro}

\ccsdesc[500]{Theory of computation~Data structures design and analysis}
\ccsdesc[300]{Applied computing~Bioinformatics}
\keywords{Taxonomic classification, maximal exact matches, Boyer--Moore--Li, run-length compressed suffix arrays, tag arrays, credit schemes, 16S rRNA, nanopore reads}

\acknowledgements{Many thanks to Ben Langmead for recommending classification with MEMs, and to Omar Ahmed, Victor Alfaro and Christina Boucher.  The first author thanks Dr.\ Curtis Marcoux for leading the team that treated him for the past two years, Dr.\ Amy Trottier for convincing him that bioinformatics is actually useful, and the staff of the 4th and 8th floors of the Victoria General Hospital in Halifax.  The ideas in this paper are the authors'; the code, the experiments and the text were developed by Claude Opus 5.5, an AI model by Anthropic, under the authors' direction.}

\EventEditors{John Q. Open and Joan R. Access}
\EventNoEds{2}
\EventLongTitle{42nd Conference on Very Important Topics (CVIT 2016)}
\EventShortTitle{CVIT 2016}
\EventAcronym{CVIT}
\EventYear{2016}
\EventDate{December 24--27, 2016}
\EventLocation{Little Whinging, United Kingdom}
\EventLogo{}
\SeriesVolume{42}
\ArticleNo{23}

\begin{document}
\maketitle
\begin{abstract}
Taxonomic classifiers based on exact matches differ in which matches they use and in how each match splits its credit among the taxa containing it: Kraken~2 gives a $k$-mer's credit to the lowest common ancestor of those taxa, SPUMONI~2 labels each position of the read, Tagger gives each maximal exact match (MEM) one label, and Cliffy splits a match's credit evenly over a roughly logarithmic sample of the taxa.
We compare these schemes on the same long MEMs, mainly on short 16S rRNA reads classified to genera with SILVA, where credit proportional to the number of occurrences of the MEM in each taxon is the most accurate, by 1.0 to 6.8 percentage points.
We compute long MEMs with Boyer--Moore--Li on a run-length compressed suffix array, and count their occurrences in each taxon with a complete, run-length compressed tag array, in time proportional to the number of tag runs in the MEM's BWT interval, without locating any occurrence.
We bound the number of tag runs in terms of the BWT runs and of how often suffixes with long common prefixes have different tags, and show that tags constant on $d$ blocks of the text have a macro scheme with at most $r + 2d$ phrases.
On SILVA our index takes 1.44\,GB and classifies a read in 66\,$\mu$s with one thread at 93.8\% accuracy; on the same reads and server, it is 14 times smaller than Cliffy's undigested index, and 4.4 times faster and 8 percentage points more accurate than Cliffy's LCA query on it. Over minimizer digests it takes 0.49\,GB and reaches 92.5\%; on the server, 92.1\% against 84.2\% for Cliffy's LCA query on its digested index.
On nanopore reads from whole genomes, where almost every long MEM occurs in one species, we instead use each read's longest MEMs, found in one pass of a variant of Boyer--Moore--Li with a $k$-mer filter, which assigns 99.6\% of the reads to their species with no length threshold.
\end{abstract}
\clearpage

\section{Introduction}
\label{sec:intro}

A sequencing experiment on a microbial sample produces millions of \emph{reads}, strings copied from random places in the DNA of the organisms in the sample: a few hundred bases long and accurate with short-read technologies, or many thousands of bases long and with more errors with long-read technologies such as Oxford Nanopore (ONT).
A read can come from either strand, so it may match the database only as a \emph{reverse complement}.
Taxonomic classification assigns each read to a node of a taxonomy, such as a species or genus, by comparing it with a reference database of sequences whose taxa are known.
Kraken~\cite{WS14} made this fast by precomputing, for every $k$-mer in the database, the lowest common ancestor (LCA) in the taxonomy of the genomes containing it, and Kraken~2~\cite{WLL19} made the database much smaller by storing minimizers in a compact probabilistic hash table.
As databases grow, however, fewer $k$-mers are specific to a single taxon, so $k$-mer LCAs move up the taxonomy and species-level resolution is lost, both for whole genomes~\cite{NKPT18} and for the 16S rRNA gene and other marker genes~\cite{CLMP24}.
In work in preparation, Cheng et al.~\cite{CALL} report that maximal exact matches (MEMs), which are as long as the match allows (Section~\ref{sec:schemes}), give better classifications than any single choice of $k$.

Meanwhile, more and more classifiers are built with machine learning instead: from models of $k$-mer frequencies~\cite{RSE+20,MAHK22,LKYH23,AHL+24} to deep networks trained on reads~\cite{LBL+20,MKK+22} and language models pretrained on genomes~\cite{JZLD21,DLT+25}.
Such models are easy to apply to new data, but their answers are hard to explain and depend on their training data in ways that are hard to inspect.
Classifiers based on pattern matching are exact and their answers can be traced to specific matches, but they remain the natural choice only if they keep pace with growing databases, in accuracy and in the time and space they need.
This paper asks what that requires.

A classifier based on exact matches can use $k$-mers, the phrases of a greedy parse of the read, all the MEMs of the read, or only its long MEMs.
It must also decide how a match that occurs in several taxa splits its \emph{credit}, which we take to be the match's length, among those taxa.
Existing tools differ in both choices (Section~\ref{sec:schemes}).
Kraken~2 gives all the credit of a $k$-mer to the LCA of the taxa containing it.
SPUMONI~2~\cite{ARG+23} gives each position of a read the label of one occurrence of its longest match.
Tagger~\cite{DAF+25} gives each MEM of at least a given length the label of one of its occurrences.
Cliffy~\cite{ABL25} splits the credit of each phrase evenly over a sample of the taxa containing it, defined by records in the order of the taxa and of roughly logarithmic size, or over the range of taxa between the leftmost and rightmost ones containing it.
We also consider splitting the credit evenly among all the taxa containing the match, and splitting it in proportion to the number of occurrences of the match in each taxon.

Our contributions are the following.
\begin{enumerate}
\item We compare these choices on the same matches, in two regimes that stress classification differently (Sections~\ref{sec:regimes} and~\ref{sec:accuracy}): real ONT reads from the ZymoBIOMICS mock community, classified to species against whole reference genomes, as in the SPUMONI~2 experiments~\cite{ARG+23}; and short simulated reads from the conserved 16S rRNA gene, classified to genera against the SILVA database~\cite{QPY+13}, which is small, very repetitive and unevenly populated, as in the Cliffy experiments~\cite{ABL25}.
Long MEMs make classification accurate in both regimes.
On the Zymo reads, almost every long MEM occurs in one species, so the split hardly matters and the choice of matches is what counts; on SILVA, where it does matter, credit proportional to the number of occurrences is clearly the most accurate split.
Proportional credit gives each taxon the expected weight of a single label drawn uniformly from the occurrences of the match, without the variance of such a draw.
It also favors taxa with much sequence in the database, so it can be expected to help only when the database resembles the sample (Section~\ref{sec:schemes}).
\item We show that long MEMs and their counts can be computed fast in compressed space (Sections~\ref{sec:bml} and~\ref{sec:tags}).
Long MEMs are found with Boyer--Moore--Li (BML)~\cite{Gag24,Li24}, which skips most of the parts of a read that cannot be in a long MEM, on a run-length compressed suffix array (RLCSA)~\cite{MNSV10,BGM+25}.
The counts come from a complete, run-length compressed \emph{tag array}~\cite{BGG+24} that stores the taxon of every suffix of the reference in suffix-array order, where Tagger stores only a sample of the tags: the occurrences of a MEM in each taxon are counted by scanning the tag runs that overlap the MEM's BWT interval, without locating any occurrence.
The tag array is built from the BWT in space proportional to the number of BWT runs plus the number of tag runs, and the sequence of its runs can be grammar-compressed, because LF maps the tags under each BWT run to a copy of themselves.
On SILVA the resulting index (Section~\ref{sec:tradeoffs}) is 14 times smaller than Cliffy's undigested index and faster and more accurate than Cliffy's LCA query on the same reads and server, builds in minutes on a desktop computer where Cliffy's took almost three days on the server, and is more accurate than Kraken~2 and Tagger on the same machine and reads.
\end{enumerate}
Our implementation, KATKA, is available at \url{https://github.com/TravisGagie/KATKA}; we describe it only as far as needed to support these results.

\section{Matches and credit}
\label{sec:schemes}
\label{sec:background}

\subparagraph*{Matches.}
Let $S$ be the reference text, consisting of the reference sequences and their reverse complements (reversed, with A and T, and C and G, swapped), grouped by taxon (Section~\ref{sec:index}), and let $R$ be a read.
Following~\cite{Gag24}, a \emph{MEM} of $R$ is a substring $R[a..e - 1]$ that occurs in $S$ but cannot be extended to the left or to the right and still occur.
Maximality is thus with respect to intervals of the read, not to pairs of occurrences as is usual in stringology, and these MEMs are the super-maximal exact matches (SMEMs) of Li~\cite{Li12} and of Tagger.
Cliffy instead cuts the read into consecutive \emph{Ziv--Merhav phrases} (ZMPs) by searching backward from the end of the read for as long as the match occurs, and starting a new search at the character where the previous one failed.
This mirrors the cross parsing of Ziv and Merhav~\cite{ZM93}, later used as Relative Lempel--Ziv (RLZ) parsing~\cite{KPZ10}.
Each ZMP $R[j..f - 1]$ is the longest suffix of $R[0..f - 1]$ that occurs in $S$, so it cannot be extended to the left and is a prefix of the MEM starting at $j$; ZMPs are generally shorter than the MEMs that contain them.
A $k$-mer is a substring of fixed length $k$, and Kraken~2 uses minimizers: from each $k$-mer it keeps the substring of a fixed shorter length with the smallest hash.
Short matches occur in many taxa, so they say little about where a read came from; with Boyer--Moore--Li we use only the MEMs of length at least a threshold $L$.

\subparagraph*{Credit.}
Let $M$ be a match, let $c_g$ be the number of occurrences of $M$ in the reference sequences of taxon $g$, counting both strands, let $D = \{g : c_g > 0\}$ and let $C = \sum_g c_g$.
A \emph{credit scheme} splits the credit $|M|$ into weights $w_v \geq 0$ for the nodes $v$ of the taxonomy, with $\sum_v w_v = |M|$; all schemes except Kraken~2's give weight only to the leaves, which are the taxa we classify to.
The read is assigned to the taxon with the largest total weight over all its matches, except with Kraken~2's scheme.
Table~\ref{tab:schemes} lists the schemes we consider, with the $d$ taxa numbered $1, \ldots, d$ in the left-to-right order of the leaves of the taxonomy, so that the taxa under any node form an interval, and with $g_\ell = \min D$ and $g_r = \max D$.
\begin{description}
\item[LCA.] Kraken~2 gives $|M|$ to the LCA of $D$, and assigns the read to the leaf whose root-to-leaf path collects the most weight.
Cliffy's LCA query instead splits the credit evenly over the \emph{range} of taxa from $g_\ell$ to $g_r$, giving $|M|/(g_r - g_\ell + 1)$ to each, including taxa that do not contain $M$; all of them lie under the LCA.
\item[One label per position.] For each position of the read, SPUMONI~2 computes the length of the longest match starting there (the matching statistics, approximated), with the r-index~\cite{GNP20}, and for classification into several classes the class of one occurrence of that match, read from document-array samples at the boundaries of BWT runs~\cite{ARG+23}; the read goes to the class with the most positions.
\item[One label per MEM.] Tagger gives $|M|$ to the taxon of the last occurrence of $M$ in BWT order, which it maintains during the search much as the r-index maintains its \emph{toehold}, the position of one occurrence of the current match.
That occurrence is more likely to lie in a taxon with many occurrences, so we model this scheme as giving $|M|$ to one taxon $g$ drawn with probability $c_g / C$, although Tagger's label is deterministic.
\item[Sample.] Cliffy's approximate document listing gives $|M|/|D'|$ to each taxon of a subset $D' \subseteq D$.
At the boundaries of BWT runs, Cliffy stores a \emph{document array profile}~\cite{ARBL23}, which gives for each taxon the length of the longest common prefix of the suffix at that position with any suffix of that taxon, and keeps only its \emph{cliffs}: the entries larger than every entry to their left, or to their right, in the order of the taxa.
These are left-to-right and right-to-left maxima, or records (B.~Langmead, personal communication), and a random permutation of $d$ distinct values has on average $H_d = \ln d + O(1)$ left-to-right maxima~\cite{Ren62}, so there are often about logarithmically many cliffs.
$D'$ consists of the taxa whose cliffs are at least $|M|$, so it is determined by the order of the taxa, not drawn at random or weighted by counts, and it always contains $g_\ell$ and $g_r$.
\item[Even.] Each taxon in $D$ gets $|M|/|D|$, as Cliffy's approximate document listing would if it listed $D$ exactly.
\item[Proportional.] Each taxon in $D$ gets $|M| \cdot c_g / C$.
\end{description}

\begin{table}[t]
\centering
\caption{Matches and credit schemes. $M$ is a match, $c_g$ is the number of its occurrences in taxon $g$, $C = \sum_g c_g$, $D$ is the set of taxa containing it, and $g_\ell$ and $g_r$ are the leftmost and rightmost of them in the order of the leaves of the taxonomy.}
\label{tab:schemes}
\resizebox{\textwidth}{!}{\begin{tabular}{llll}
\toprule
classifier & matches & credit scheme & weights \\
\midrule
Kraken~2~\cite{WLL19} & $k$-mers (minimizers) & LCA & $|M|$ to $\mathrm{LCA}(D)$ \\
SPUMONI~2~\cite{ARG+23} & longest match at each position & one label per position & -- \\
Tagger~\cite{DAF+25} & MEMs of length $\geq L$ & one label per MEM & $|M|$ to one $g \in D$ \\
Cliffy~\cite{ABL25}, LCA query & ZMPs & range & $|M| / (g_r - g_\ell + 1)$ to each of $g_\ell..g_r$ \\
Cliffy~\cite{ABL25}, approx.\ listing & ZMPs & sample & $|M|/|D'|$ to each of $D' \subseteq D$ \\
-- & MEMs of length $\geq L$ & even & $|M|/|D|$ to each of $D$ \\
ours & MEMs of length $\geq L$ & proportional & $|M| \cdot c_g / C$ to each $g \in D$ \\
\bottomrule
\end{tabular}}
\end{table}

\subparagraph*{Why proportional credit.}
If one label is drawn uniformly at random from the occurrences of $M$, taxon $g$ gets it with probability $c_g / C$, so proportional credit gives each taxon the expected weight of such a draw.
Summed over the matches of a read, proportional credit keeps the expected total weight of each taxon under one label per match, without the variance of the draws, which matters most for reads with few long matches.
Similarly, even credit gives each taxon the expected weight of a label drawn uniformly from $D$.
A sample like Cliffy's is not a uniform subset of $D$, since it always contains $g_\ell$ and $g_r$, so it only roughly approximates even credit.
For a single match, $|M| \cdot c_g / C$ is also $|M|$ times the posterior probability that a string containing $M$ was drawn from taxon $g$, if it was drawn from a uniformly random position of the database, ignoring the boundaries of the sequences and overlapping occurrences of $M$.
The implicit prior is thus proportional to the amount of sequence of each taxon in the database, and summing over the matches of a read is a heuristic; it does not give the posterior given all the matches.
This prior can help when the taxa that are abundant in the sample also have much sequence in the database, and hurt when they do not; then normalizing each count by the size of its taxon, or correcting towards a known or estimated composition of the sample, may be better.

Even credit needs the set $D$ itself, not just $g_\ell$ and $g_r$, and proportional credit needs the counts, which matter when a match occurs much more often in some taxa than in others (Figure~\ref{fig:lca}).

\begin{figure}[t]
\centering
\begin{tikzpicture}[level distance=6mm, every node/.style={inner sep=1.2pt},
  level 1/.style={sibling distance=40mm}, level 2/.style={sibling distance=20mm}, level 3/.style={sibling distance=10mm}]
\node[circle,fill] {}
  child {node[circle,fill] {}
    child {node[circle,fill] {} child {node (g1) {$g_1$}} child {node[draw,thick] (g2) {$g_2$}}}
    child {node[circle,fill] (lca) {} child {node (g3) {$g_3$}} child {node (g4) {$g_4$}}}}
  child {node[circle,fill] {}
    child {node[circle,fill] {} child {node[draw,thick] (g5) {$g_5$}} child {node (g6) {$g_6$}}}
    child {node[circle,fill] {} child {node (g7) {$g_7$}} child {node (g8) {$g_8$}}}};
\node[below=1mm of g2, font=\scriptsize, align=center] {$c = 9$\\even $|M|/2$\\prop.\ $0.9|M|$};
\node[below=1mm of g5, font=\scriptsize, align=center] {$c = 1$\\even $|M|/2$\\prop.\ $0.1|M|$};
\draw[decorate, decoration={brace, mirror, amplitude=4pt}] ([yshift=-15mm]g2.west) -- ([yshift=-15mm]g5.east) node[midway, below=4pt, font=\scriptsize] {range (Cliffy's LCA query): $|M|/4$ to each of $g_2, \ldots, g_5$};
\end{tikzpicture}
\caption{A match $M$ that occurs 9 times in taxon $g_2$ and once in $g_5$ (boxed).
The LCA of $g_2$ and $g_5$ is the root, so with Kraken~2's scheme the match gives weight only to the root.
Cliffy's LCA query splits the credit $|M|$ evenly over the taxa from the leftmost to the rightmost one containing the match, including $g_3$ and $g_4$, which do not contain it.
Even credit splits the credit between $g_2$ and $g_5$, and proportional credit gives $g_2$ nine times as much weight as $g_5$.}
\label{fig:lca}
\end{figure}

Other r-index classifiers, such as SPUMONI~2 and col-BWT~\cite{BSL26}, mostly decide whether a read comes from a target pangenome; Cobas, M\"akinen and Rossi~\cite{CMR20} report document frequencies from a grammar-compressed document array; and Omri et al.~\cite{OSM+26} classify genomic sequences by compression.

\section{Two regimes}
\label{sec:regimes}
\label{sec:setup}

Our two experiments stress different choices.
On the Zymo reads, almost every long MEM occurs in one species, so what matters is which matches to use; on SILVA, many long MEMs occur in several genera, so what matters is how to split their credit.
We take them in that order, which is also the order of the SPUMONI~2 and Cliffy experiments that they reproduce.

\subparagraph*{Long reads from whole genomes: Zymo.}
SPUMONI~2's experiments~\cite{ARG+23} use ONT reads of a ZymoBIOMICS \emph{mock community}, a laboratory mixture of known bacteria and yeasts, with a reference of 14\,280 RefSeq sequences of seven of its bacterial species, from 267 for \emph{B.~subtilis} to 7\,914 for \emph{E.~coli}, and minimap2 alignments to label each read with a species or as yeast.
N.~Brown made the reference, the reads and their labels available.\footnote{\url{https://zenodo.org/records/23251764}}
There are 581\,802 bacterial reads with a species label, which we classify whole.
SPUMONI~2's Table~2 instead takes the first 180 to 720 bases of each of these reads and of 10\,245 yeast reads, and asks whether each prefix comes from the bacterial reference; the \emph{sensitivity} is the fraction of bacterial reads that are accepted, and the \emph{specificity} the fraction of yeast reads that are rejected.
Here a taxon contains many complete genomes, which are long and close to each other, the reads are long but have frequent errors, especially in homopolymers, and some reference sequences contain long low-complexity regions, such as the short repeats at the ends of yeast chromosomes (telomeres).
We tag each suffix of $S$ with its species.
Tagging each suffix with its genome instead, which would make $S$ a collection of documents in the usual sense, gives a tag array with too many runs (Section~\ref{sec:tags}).
We homopolymer-compress the reference and the reads (Section~\ref{sec:bml}), after which $S$ has $n = 39.3$ billion characters and its BWT has $r = 338$ million runs, against $n = 53.7$ billion and $r = 437$ million without compression.
We count a read as correct if its species alone has the largest weight.

\subparagraph*{Short reads from a conserved gene: SILVA.}
Many studies of bacterial and archaeal communities sequence only copies of the 16S ribosomal RNA gene, which is about 1\,500 bases long and present in every bacterium and archaeon.
An experiment copies one or two of its nine \emph{hypervariable regions}, V1 to V9, which differ between taxa, and sequences each copy from both ends, giving two reads, or \emph{mates}; the reads are short and the regions similar across related taxa, so 16S reads are usually classified only to genera~\cite{JSL+19}.
We reproduced Cliffy's SILVA experiment~\cite{ABL25} as closely as we could from its authors' scripts.
We used SILVA 138.1 SSU NR99~\cite{QPY+13}, grouped its sequences by genus with their script, and kept 9\,118 genera containing 402\,378 sequences, in the left-to-right order of the leaves of the SILVA taxonomy.
The database is much smaller than the Zymo reference, but the genera are very unevenly represented, and sequences of different genera often share long substrings, which makes classification with exact matches hard and the database very repetitive.
With both strands, $S$ has $n = 1.17$ billion characters and its BWT has $r = 68.3$ million runs.
We simulated 10 million pairs of 250-base reads for each of the regions V1--V2, V3--V4, V4 and V4--V5 with the authors' simulator, MicrobeMixer, using the sequencing-error simulator ART~\cite{HLMM12} and the genus abundances of an aquatic community, and measured accuracy on every 50th pair.
Each mate is classified separately and the weights of the two mates are added; a read pair is \emph{strictly} correct if its true genus alone has the largest weight; ties and unclassified pairs count as wrong.
Cliffy's scoring also counts ties for the largest weight as correct, and pairs without any match unless their genus is ``uncultured'' (a placeholder for sequences not assigned to a known genus).
We report the strict accuracy unless stated otherwise, as the mean over the four regions.

\subparagraph*{Machines.}
We used a seven-year-old desktop (Intel Core i7-8700, 6 cores, 16\,GB of RAM) for SILVA and a two-year-old laptop (Intel Core i5-1335U, 12 threads, 14\,GB of RAM available) for Zymo.
Cliffy's indexes do not fit on either, so the second author built and ran them, and our default configuration, on our reads on a server with 256\,GB of RAM (Appendix~\ref{app:details}); our numbers for SPUMONI~2 and minimap2 are from their papers.

\section{Which matches and which credit}
\label{sec:accuracy}

\subparagraph*{Which matches: Zymo.}
On the Zymo reads, $L = 30$ homopolymer-compressed characters (about 41 bases) is the best fixed threshold (99.43\% with proportional credit), since ONT reads have few error-free stretches much longer.
Even so, 0.4\% of the reads have no MEM that long and stay unclassified.
Using only the longest MEMs of each read, measured in compressed characters (Section~\ref{sec:bml}), every read is classified, and proportional credit reaches 99.61\%, the best of all our configurations; on the 720-base prefixes, species accuracy is 97.88\% with proportional credit and 97.44\% with even credit.
How the credit is split hardly matters: the 89.6 million long MEMs of the whole reads occur in 1.01 species on average, and all the schemes of Table~\ref{tab:credit} are within 0.16 percentage points (\emph{points}, from now on) of each other: the sample reaches 99.58\%, even credit and the range 99.54\%, and proportional credit 99.43\%, and normalizing each count by the number of reference sequences of its species, or by their total length, gives 99.42\% and 99.43\%.

In the prefix test of SPUMONI~2's Table~2 with the first 720 bases of each read, we have 97.24\% sensitivity and 97.98\% specificity without homopolymer compression (96.88\% and 97.90\% with it; Section~\ref{sec:bml}), against 95.32\% and 97.06\% for SPUMONI~2 with minimizer digests, 97.62\% and 97.38\% for SPUMONI~2 without them, and 97.90\% and 97.85\% for minimap2.
With the longest MEMs, thresholding their length gives a whole sensitivity--specificity curve from one run: with homopolymer compression, 97.32\% and 96.33\% at 28 compressed characters, and 96.88\% and 97.90\% at 30.
The longest MEMs of yeast prefixes have a median of 23 compressed characters, about $\log_4$ of $n$ times the prefix length, as for random strings, so the threshold can be set a few characters above that rather than tuned.

\subparagraph*{Which credit: SILVA.}
On SILVA, a long MEM of one of our reads occurs in 3.8 genera on average (Section~\ref{sec:tags}), so the split matters.
Long MEMs are still the right matches.
With the range scheme, all the MEMs of each read were 0.4 to 1.3 points more accurate than ZMPs, both on the sequences and on their \emph{minimizer digests}~\cite{RHH+04,ARG+23} (Appendix~\ref{app:matches}).
With even credit, strict accuracy is 90.56\% with $L = 15$ and 90.28\% with $L = 30$, drops by about 2 points at $L = 50$ and collapses for larger $L$, as the 250-base reads start to have no MEM that long (Table~\ref{tab:L}); with proportional credit it is 93.74\% with $L = 15$, 93.77\% with $L = 30$ and 93.48\% with $L = 40$.
Discarding short MEMs also makes classification much faster (Section~\ref{sec:tradeoffs}).

Table~\ref{tab:credit} compares the credit schemes on the same long MEMs, so that only the split differs.
For one label per MEM and for the sample, we simulate the schemes of Tagger and Cliffy from the counts: one label drawn with probability $c_g/C$, and the left-to-right and right-to-left maxima of random values given to the taxa in $D$, which form a sample containing $g_\ell$ and $g_r$ with $2 H_{|D|} - 1$ taxa on average.
The tools themselves give answers that are not random, and use different matches, so this comparison measures the credit schemes, not the tools.
One label drawn by counts is 1.6 points less accurate than proportional credit, which has the same expected weights, so this is roughly what the variance of one label per MEM costs; it is still 1.9 points more accurate than even credit.
The sample is about as accurate as even credit, and both are 3.3 to 3.5 points less accurate than proportional credit.
Proportional credit also almost eliminates ties, and it is 1 point more accurate than normalizing each count by twice the number of sequences in the genus, which gives each genus credit in proportion to the fraction of its sequences containing the match.
Even credit is more accurate than the range: with $L = 15$, by 3.5 points without digestion and 3.6 points with it (Appendix~\ref{app:matches}).
On the same machine and reads, Kraken~2 and Tagger are less accurate than long MEMs with proportional credit (Table~\ref{tab:others}).

\begin{table}[t]
\centering
\caption{Accuracy of the credit schemes on the same long MEMs ($L = 30$): species level on the Zymo reads (whole homopolymer-compressed reads, low-complexity MEMs ignored), and genus level on SILVA (strict, mean over the four regions, 200\,000 read pairs per region). One label per MEM and the sample simulate Tagger's and Cliffy's schemes from the counts.}
\label{tab:credit}
\small
\begin{tabular}{lcc}
\toprule
credit scheme & Zymo (species) & SILVA (genus) \\
\midrule
range (Cliffy's LCA query) & 99.54\% & 87.0\% \\
one label per MEM (drawn by counts) & 99.44\% & 92.18\% \\
sample (simulated cliffs) & 99.58\% & 90.47\% \\
even & 99.54\% & 90.28\% \\
proportional, normalized by taxon size & 99.42\% & 92.76\% \\
proportional & 99.43\% & 93.77\% \\
\bottomrule
\end{tabular}
\end{table}

\subparagraph*{Composition.}
MicrobeMixer chooses the genus of each simulated read according to the abundances of an aquatic community, and only then a sequence of that genus, so our SILVA reads are not drawn in proportion to how well each genus is represented in SILVA.
Proportional credit is still the most accurate, and its 1-point advantage over normalized credit suggests that its prior is an advantage here, presumably because genera that are abundant in the community are also well represented in SILVA.

\section{Finding long MEMs}
\label{sec:bml}
\label{sec:index}

\subparagraph*{The index.}
For $i = 1, \ldots, d$, let $G_i$ be the concatenation of the reference sequences of taxon $i$, separated by a character \texttt{X} that matches nothing, with the taxa numbered in the left-to-right order of the leaves of the taxonomy.
We index the text $S = G_1 \mathtt{X}\, \mathrm{rc}(G_1) \mathtt{X} \cdots G_d \mathtt{X}\, \mathrm{rc}(G_d) \mathtt{\$}$ of length $n$, where $\mathrm{rc}$ denotes the reverse complement and \texttt{\$} is a unique terminator, smaller than every other character; no match crosses a separator.
We call $G_i \mathtt{X}\, \mathrm{rc}(G_i) \mathtt{X}$ the \emph{block} of taxon $i$ (for $i = d$, the last character is the terminator).
Let $\mathrm{SA}$ be the suffix array of $S$, let the BWT be the string whose $i$th character is $S[\mathrm{SA}[i] - 1]$, or \texttt{\$} if $\mathrm{SA}[i] = 1$, and let $r$ be the number of maximal runs of equal characters in the BWT.
The \emph{tag array} $T[1..n]$ gives, for each $i$, the taxon $T[i]$ of the block containing $\mathrm{SA}[i]$.
LF maps the BWT position of a suffix to that of the suffix one character longer, $\phi$ maps $\mathrm{SA}[i]$ to $\mathrm{SA}[i - 1]$, and the \emph{BWT interval} of a string is the interval of BWT positions of the suffixes it prefixes; \emph{backward search} computes the BWT interval of $cP$ from that of $P$.
Since $S$ contains both strands, a read matches $S$ whichever strand it came from, and extending a match to the right by a base $c$ is the same as extending its reverse complement to the left by the complement of $c$, so backward search alone can extend matches in both directions, although switching direction means searching again for the reverse complement of the whole match.
We store the BWT of $S$ as an RLCSA in the form of Brown et al.~\cite{BGM+25}, which supports LF and backward search directly: for each character $c$, the sorted list of the starting positions of the runs of $c$ in the BWT, with the cumulative lengths of those runs, so that a backward step by $c$ is a predecessor search in the list of $c$, whose cost does not depend on the size of the alphabet.
As Bowtie~2~\cite{LS12} and Cliffy do, we also store the BWT interval of every string of length $t = 10$, so that a search starts with one table lookup; if the string does not occur, we fall back to stepping base by base, so we still know exactly where a match fails.

\subparagraph*{Boyer--Moore--Li.}
BML~\cite{Gag24,Li24} finds the MEMs of $R$ of length at least $L$, writing $R[a..e - 1]$ for the substring from position $a$ to position $e - 1$.
Starting with $x = 0$, it checks whether a MEM of length at least $L$ starts in $R[x..]$ by searching backward for the window $R[x..x + L - 1]$, starting from its right end.
If the step that prepends $R[j - 1]$ fails, so that $R[j - 1..x + L - 1]$ does not occur, then no MEM of length at least $L$ starts at a position from $x$ to $j - 1$, and BML jumps to $x = j$, skipping the rest of the window as in Boyer--Moore pattern matching.
If the whole window occurs, BML extends the match to the left as far as it can, to some position $a$, then to the right, by searching backward for the reverse complement of $R[a..x + L - 1]$ and continuing to some position $e - 1$, reports the MEM $R[a..e - 1]$ and continues with $x = e - L + 1$, the first window that ends after the MEM.
Figure~\ref{fig:bml} shows an example.
Gagie~\cite{Gag24} gives the theory, the full version~\cite{Gag24arxiv} gives a practical algorithm, and Li implemented BML in ropebwt3~\cite{Li24}.
At the end of the forward extension we have the BWT interval of the MEM's reverse complement, whose occurrences are in the same taxa as the MEM's, which is all the tag array needs.

\begin{figure}[t]
\centering
\begin{tikzpicture}[x=2.3mm, y=5.5mm, every node/.style={font=\scriptsize}]
\fill[blue!25] (5,-0.13) rectangle (29,0.13);
\fill[blue!45] (12,-0.13) rectangle (41,0.13);
\draw[thick] (0,0) -- (50,0); \node[anchor=east] at (-0.5,0) {$R$};
\foreach \i in {0,10,20,30,40,50} { \draw (\i,-0.15) -- (\i,0.15); \node[below=2pt] at (\i,-0.1) {\i}; }
\node[anchor=east] at (-0.5,4.4) {(1)};
\draw[|-|, thick] (0,4.4) -- (12,4.4);
\draw[->, red!70!black, thick] (12,4.75) -- (4.6,4.75); \draw[red!70!black, thick] (4.5,4.2) -- (4.5,4.9);
\node[red!70!black, anchor=west] at (13,4.4) {$R[4..11]$ does not occur: jump to $x = 5$};
\node[anchor=east] at (-0.5,3.0) {(2)};
\draw[|-|, thick] (5,3.0) -- (17,3.0);
\draw[->, green!40!black, thick] (17,3.35) -- (5.1,3.35);
\draw[->, blue!70!black, thick] (17,2.55) -- (29,2.55);
\node[blue!70!black, anchor=west] at (30,2.8) {MEM $R[5..28]$, $e = 29$};
\node[anchor=east] at (-0.5,1.35) {(3)};
\draw[|-|, thick] (18,1.35) -- (30,1.35);
\draw[->, blue!70!black, thick] (18,0.95) -- (12,0.95);
\draw[->, blue!70!black, thick] (30,0.95) -- (41,0.95);
\node[blue!70!black, anchor=west] at (42,1.35) {$x = e - L + 1 = 18$};
\node[blue!70!black, anchor=west] at (42,0.85) {MEM $R[12..40]$};
\end{tikzpicture}
\caption{Boyer--Moore--Li with $L = 12$ on a read $R$.
(1)~The search for the window $R[0..11]$ fails at the step that prepends $R[4]$, so no MEM of length at least $L$ starts at positions 0 to 4, and BML jumps to $x = 5$.
(2)~The window $R[5..16]$ occurs; it cannot be extended to the left, since $R[4..16]$ contains $R[4..11]$, so BML extends it to the right and reports the MEM $R[5..28]$.
(3)~The next window is $R[18..29]$; BML extends it in both directions, to the MEM $R[12..40]$, and continues with $x = 30$.}
\label{fig:bml}
\end{figure}

\subparagraph*{Long, noisy reads.}
ONT reads have many errors in \emph{homopolymers}, runs of a repeated base, and these errors break long matches.
We therefore \emph{homopolymer-compress} the reference and each read, replacing every run of equal bases by a single base, and measure $L$ in compressed characters ($L = 30$ is about 41 bases on the Zymo reads); spans are mapped back to the read's coordinates for scoring.
Homopolymer compression also turns some low-complexity sequences into short periodic ones: yeast telomeric repeats become nearly $(\mathtt{CA})^*$, and AT-rich regions of mitochondrial DNA nearly $(\mathtt{TA})^*$.
Long MEMs in such regions occur in many places, and mostly in the reference genomes of a few species, so they pull reads towards those species.
We therefore ignore every MEM with fewer than $\tau$ distinct 3-mers, computed on the compressed sequence; we use $\tau = 14$, out of the 36 3-mers without repeated consecutive bases.
On the 720-base Zymo prefixes, this raises the specificity of the test of SPUMONI~2's Table~2 from 89.18\% to 97.90\% at the cost of 0.03 points of sensitivity; without homopolymer compression (and $\tau = 16$ of 64), the test gives 97.24\% and 97.98\%, so compression mainly makes the index smaller ($r = 338$ instead of 437 million).

\subparagraph*{Longest MEMs.}
BML can find the longest MEMs of a read in one pass, without a threshold: start with $L = 1$ and, whenever a MEM longer than $L$ is found, set $L$ to its length.
Raising $L$ only stops the search for MEMs shorter than one already found, and a MEM as long as the current $L$ is still found, so the longest MEMs, including ties, are among those reported; as $L$ grows, BML skips more of the read.
KeBaB~\cite{BDZ+25} first cuts the read into \emph{pseudo-MEMs}, the maximal substrings of length at least $k$ all of whose $k$-mers pass a Bloom filter of the reference's $k$-mers, so that every MEM lies in a pseudo-MEM.
Searching only the longest pseudo-MEMs, as in KeBaB's pipeline, can miss the longest MEM, since a long pseudo-MEM need not contain one; instead we search the pseudo-MEMs from longest to shortest and stop at the first one shorter than the current $L$, which loses nothing.
On the 581\,802 whole Zymo reads, this finds exactly the same longest MEMs as without the filter, which takes 0.135\,GB, and classification takes 118\,s with 12 threads on the laptop, against 211\,s without the filter and about 780\,s with $L = 30$; on SILVA, whose short, accurate reads have almost all their $k$-mers in the database, the filter saved little time (Appendix~\ref{app:alternatives}).

\section{Counting with complete tag arrays}
\label{sec:tags}
\label{sec:freq}

\subparagraph*{Tag arrays.}
A tag array associates a tag with each position of the BWT~\cite{BGG+24,ESP25,ESP26,OO26}, and Wheeler maps~\cite{BGG+24} bound its space in terms of the number of runs in the tags.
Here the tag array is $T$, in which the tag of a BWT position is the taxon of the corresponding suffix of $S$.
If a document is the concatenation of the sequences of one taxon, then this tag array is the document array of Muthukrishnan~\cite{Mut02}; other tag arrays, such as those of Eskandar et al.~\cite{ESP25,ESP26}, whose tags are vertices of a pangenome graph, and of Olbrich and Ohlebusch~\cite{OO26}, whose tags are columns of a multiple alignment, are not document arrays.
Tagger~\cite{DAF+25} stores only the tags at the ends of BWT runs, from which the tags of other positions cannot in general be recovered.
To stress the difference, we call a tag array \emph{complete} if it gives the tag of every BWT position, whether it stores them run-length compressed or recovers them from a sample.
We run-length compress the tag array, storing the run starts with Elias--Fano coding and the tag of each run, and let $\rho$ be the number of runs in $T$, which we call \emph{tag runs}.

\subparagraph*{Contextual locality.}
The runs are few because the BWT sorts the suffixes of $S$, and with them the characters preceding them, by their \emph{contexts}, the strings that follow those characters.
A property has \emph{contextual locality}~\cite{Gag24tag,BGG+24,OO26} if suffixes that share long prefixes tend to have the same value of the property, their tag.
Olbrich and Ohlebusch~\cite{OO26} note that the tag array of such a property is then likely to be run-length compressible, for the same reason that the BWT of a repetitive text is; Lemma~\ref{lem:runs} makes this precise for tags that are constant on blocks of $S$, such as taxa.
Let $\Psi = \mathrm{LF}^{-1}$, so that $\mathrm{SA}[\Psi(i)] = \mathrm{SA}[i] + 1$, let a \emph{tag boundary} be a position $p > 1$ with $T[p - 1] \neq T[p]$, so that $\rho$ is one more than the number of tag boundaries, and let $\mathrm{lcp}(p)$ be the length of the longest common prefix of the suffixes starting at $\mathrm{SA}[p - 1]$ and $\mathrm{SA}[p]$.
A tag boundary $p$ with a large $\mathrm{lcp}(p)$ is a violation of contextual locality: two suffixes with a long common prefix but different tags.

\begin{lemma}
\label{lem:runs}
Let $S[1..n]$, over an alphabet of size $\sigma$ and ending with a unique terminator, be cut into $d$ blocks, each with a tag, let $T[i]$ be the tag of the block containing $\mathrm{SA}[i]$, and let $r$ be the number of runs in the BWT of $S$.
For any $\lambda \geq 0$, if $B_\lambda$ is the number of tag boundaries $p$ with $\mathrm{lcp}(p) > \lambda$, then $\rho \leq (r + \sigma + 2d)(\lambda + 1) + B_\lambda + 1$.
\end{lemma}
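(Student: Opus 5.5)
The plan is to split the tag boundaries into those with $\mathrm{lcp}(p) > \lambda$, of which there are $B_\lambda$ by definition, and those with $\mathrm{lcp}(p) \leq \lambda$. The remaining task is to show that there are at most $(r + \sigma + 2d)(\lambda + 1)$ boundaries of the second kind. Since $\rho$ is one more than the number of tag boundaries, this gives the bound. To count the second kind, I would organise them into chains linked by $\Psi$ (equivalently LF), along which $\mathrm{lcp}$ grows by exactly one per step. I would then bound the number of chains by $r + \sigma + 2d$ and the number of elements of each chain with $\mathrm{lcp} \leq \lambda$ by $\lambda + 1$.

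\textbf{The step.} Let $p$ be a tag boundary with $\ell = \mathrm{lcp}(p) \geq 1$. Then $\mathrm{SA}[p-1]$ and $\mathrm{SA}[p]$ start with the same character $c$, and I would set $q = \Psi(p-1)$ and $q' = \Psi(p)$, so that $q < q'$.
\begin{itemize}
\item Both BWT positions $q$ and $q'$ carry the character $c$, and LF preserves the relative order of equal characters. Hence $q' = q + 1$ exactly when no BWT position strictly between $q$ and $q'$ holds $c$.
\item If $q' = q + 1$, then $\mathrm{lcp}(q') = \ell - 1$, because the suffixes at $q$ and $q'$ are those at $p-1$ and $p$ with $c$ removed.
\item If, moreover, neither $\mathrm{SA}[p-1]$ nor $\mathrm{SA}[p]$ is the last position of its block, then the suffixes one character shorter lie in the same blocks. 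So $T[q] = T[p-1] \neq T[p] = T[q']$, and $q'$ is again a tag boundary.
\end{itemize}
Call $q'$ the \emph{predecessor} of $p$ when all of this holds. The predecessor relation is injective, since $\Psi$ is. Iterating it from any boundary, $\mathrm{lcp}$ drops by one per step, so each maximal backward chain contains at most $\lambda + 1$ boundaries with $\mathrm{lcp} \leq \lambda$ (one for each value $0, \ldots, \lambda$).

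\textbf{Counting chains.} It remains to count the boundaries $p$ that have no predecessor, which are the chain starts. There are three ways a boundary can fail to have one.
\begin{itemize}
\item $\mathrm{lcp}(p) = 0$. The first characters differ, which happens at most $\sigma$ times in SA order, since the suffixes are grouped by first character.
\item One of $\mathrm{SA}[p-1]$, $\mathrm{SA}[p]$ is the last position of a block. Each of the $d$ block ends is one suffix position, and each such suffix can be involved in at most two adjacent pairs. This gives at most $2d$ boundaries.
\item $q' > q + 1$, so the BWT interval $(q, q')$ contains a character other than $c$. This interval then contains a BWT run boundary. For a fixed $c$, the intervals $(\Psi(p-1), \Psi(p))$ of different $p$ are disjoint, because $\Psi$ restricted to the suffixes starting with $c$ is increasing. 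Each run boundary lies between two BWT positions holding the same character $c$ only for one $c$ at a time, so I would charge each such $p$ to the first run boundary inside its interval. Each run boundary is then charged at most once, for at most $r$ starts in total.
\end{itemize}
This gives at most $r + \sigma + 2d$ chains, and hence at most $(r + \sigma + 2d)(\lambda + 1)$ boundaries with $\mathrm{lcp} \leq \lambda$, as required.

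\textbf{Main obstacle.} The delicate step is this third charging: making sure each BWT run boundary is charged only once across all characters $c$ and all $p$. The disjointness argument has to be written carefully, and I would check the off-by-one effects involving the terminator \texttt{\$}, which sits at the first BWT row in a run of its own. If that check fails, the fallback is to charge each $p$ to a run boundary via the rank of $c$ in the BWT, which still costs at most $r$, possibly at the price of the slack already present in $\sigma$.
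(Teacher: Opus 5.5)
Your proposal is correct and is essentially the paper's proof. Your predecessor $\Psi(p)$ is the paper's step $p_{h+1}=\Psi(p_h)$, and your chain starts are its endpoints $p_h$: breaks, which are your cases $\mathrm{lcp}(p)=0$ and $\Psi(p)\neq\Psi(p-1)+1$, plus at most $2d$ boundaries next to a block end. Your chains under an injective predecessor replace the paper's injective map $p_0\mapsto(p_h,h)$ with $h\le\lambda$.

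Two sentences need fixing. In the third charging, a run boundary can lie in the gaps between consecutive occurrences of several characters, so the claim that this happens ``only for one $c$ at a time'' is false. What is true is that the first run boundary after $q=\Psi(p-1)$ is the one immediately after $q$. That boundary determines $c=\mathrm{BWT}[q]$ and then $p=\mathrm{LF}(q)+1$, so the charging is injective; this is the paper's count of consecutive occurrences of a character lying in different runs, at most $r$. Similarly, $q'=q+1$ holds exactly when no character other than $c$ lies strictly between $q$ and $q'$, not when no $c$ does.
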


\begin{proof}
Call a position $p > 1$ a \emph{break} if $\Psi(p) \neq \Psi(p - 1) + 1$.
$\Psi$ maps the positions whose suffixes start with a character $c$, in order, to the positions of the occurrences of $c$ in the BWT, so a break is either a position where the first character of the suffixes changes, or one where two consecutive occurrences of $c$ in the BWT are in different runs; there are at most $r + \sigma$ breaks.
Let $p_0$ be a tag boundary with $\mathrm{lcp}(p_0) \leq \lambda$, and while $p_h$ is a tag boundary and not a break, let $p_{h + 1} = \Psi(p_h)$.
Then $\Psi(p_h - 1) = p_{h + 1} - 1$, so the suffixes at positions $p_{h + 1} - 1$ and $p_{h + 1}$ are those at $p_h - 1$ and $p_h$ without their first character, which is the same, and $\mathrm{lcp}(p_{h + 1}) = \mathrm{lcp}(p_h) - 1$.
Let $h$ be the largest index such that $p_h$ is a tag boundary.
A position with lcp $0$ is a break, so $h \leq \mathrm{lcp}(p_0) \leq \lambda$.
Either $p_h$ is a break, or $p_{h + 1}$ is not a tag boundary although $p_h$ is, which means that one of $\mathrm{SA}[p_h - 1]$ and $\mathrm{SA}[p_h]$ is the last position of a block; each of the $d$ last positions is $\mathrm{SA}[q]$ for one $q$, and then $p_h \in \{q, q + 1\}$, so there are at most $2d$ such positions $p_h$.
Since $p_0 = \mathrm{LF}^h(p_h)$, the pair $(p_h, h)$ determines $p_0$, so there are at most $(r + \sigma + 2d)(\lambda + 1)$ tag boundaries with lcp at most $\lambda$.
Adding the $B_\lambda$ other tag boundaries, and 1 because $\rho$ is one more than the number of tag boundaries, gives the bound.
\end{proof}

$B_\lambda$ itself measures violations of contextual locality, so the content of the lemma is that all the other tag boundaries number $O((r + \sigma + d)(\lambda + 1))$, independently of $n$: $\rho$ is small when $r$ and $d$ are small and, for some modest $\lambda$, few suffixes adjacent in suffix-array order share a prefix longer than $\lambda$ but have different tags.
In a reference with many genomes per species, we expect nearly every suffix to share a longer prefix with the corresponding suffixes of other genomes of its species than with any suffix of another species, so that species have contextual locality.
On the homopolymer-compressed Zymo reference, $\rho = 449$ million, only 1.33 times $r$.
If each genome is instead its own tag, the tag array is a document array, and corresponding suffixes of the genomes of a species, which are adjacent in suffix-array order and share long prefixes, have different tags, so $B_\lambda$ is large for any modest $\lambda$ and Lemma~\ref{lem:runs} gives no useful bound; in practice that tag array had so many runs that its construction ran out of memory.
On SILVA, $\rho = 359$ million, 5.3 times $r$, presumably because 16S sequences of different genera often share long substrings.

\subparagraph*{Counting.}
To count the occurrences of a MEM in each taxon, we scan the runs that overlap the MEM's BWT interval and add up the lengths of the runs of each taxon, clipped to the interval.
After a predecessor search for the first run, this takes time proportional to the number of runs in the interval rather than to the number of occurrences, and never locates an occurrence.
When counts are not needed, Muthukrishnan's document listing algorithm~\cite{Mut02} lists the taxa in an interval in time proportional to their number, with a range-minimum structure~\cite{FH11} over the array that points from each run to the previous run with the same tag; as in Sadakane's method~\cite{Sad07} and as Olbrich and Ohlebusch do~\cite{OO26}, that array need not be stored.
The usual alternative is to locate every occurrence with an r-index~\cite{GNP20}, from the toehold with $\phi$ queries, and map each to its taxon with a predecessor search over the taxon starts.
This also gives exact counts, but costs a $\phi$ query, with its cache misses, for every occurrence.
On SILVA the difference is large, because genera contain many near-identical sequences: a MEM of at least 30 bases in one of our reads occurs 70 times on average, in 3.8 genera; half of these MEMs occur only once and 5\% occur more than 200 times, yet their BWT intervals contain few tag runs.
With the same RLCSA and lookup table, classifying with an r-index takes 125\,$\mu$s per read, against 66\,$\mu$s with the tag array, for an index only 0.06\,GB smaller.

\subparagraph*{Construction.}
The tag array needs no suffix array.
One pass of LF over the text gives the suffix-array samples at the BWT run boundaries, as for the r-index, and from them $\phi$ streams the suffix array from the last BWT position to the first; each entry is mapped to its taxon with a predecessor search, and consecutive positions with the same taxon are merged into runs, which are written to disk as they are completed.
The memory needed is therefore proportional to $r + d$, plus $\rho$ for reading the runs back.
On SILVA this takes about 6 minutes and 3.3\,GB on the desktop, and on the homopolymer-compressed Zymo reference 7.2 hours and 8.9\,GB on the laptop (with a version of the construction that kept all the runs in memory).

\subparagraph*{Grammar-compressing the runs.}
\label{sec:grammar}
Our default configuration uses only run-length compression, but the sequence of the runs' tags can be compressed further.
LF maps the positions $i..j$ of a BWT run to consecutive positions $\mathrm{LF}(i)..\mathrm{LF}(j)$, whose suffixes are one character longer, so the tags at positions $\mathrm{LF}(i)..\mathrm{LF}(j)$ are a copy of those at positions $i..j$, except at the positions $\mathrm{LF}(q)$ such that $\mathrm{SA}[q]$ is the first position of a block.
The copying holds because the tag of a suffix does not change when the suffix is extended by one character within its block.
It fails for tag arrays in general: the BWT itself is a tag array, in which each suffix is tagged with the character preceding it, and it is run-length compressible, but LF does not map its runs to copies of themselves.

\begin{observation}
\label{obs:grammar}
Let $S[1..n]$, ending with a unique terminator, be cut into $d$ blocks, each with a tag, let $T[i]$ be the tag of the block containing $\mathrm{SA}[i]$, and let $r$ be the number of runs in the BWT of $S$.
Then $T$ has a bidirectional macro scheme with at most $r + 2d$ phrases, and therefore a string attractor of size at most $r + 2d$ and a grammar of size $O((r + d) \log (n / (r + d)))$.
\end{observation}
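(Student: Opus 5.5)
We build the macro scheme directly from the copying property stated just before the observation. Partition $[1..n]$ into the $r$ BWT runs. For each run $[i..j]$, the positions $\mathrm{LF}(i)..\mathrm{LF}(j)$ are consecutive, and for each $q\in[i..j]$ we have $T[\mathrm{LF}(q)]=T[q]$ unless $\mathrm{SA}[q]$ is the first position of a block, because then $\mathrm{SA}[\mathrm{LF}(q)]=\mathrm{SA}[q]-1$ lies in the same block. The images $\mathrm{LF}([i..j])$ over all runs partition $[1..n]$, since LF is a permutation. We therefore cut $T$ into phrases as follows. Each image interval $\mathrm{LF}(i)..\mathrm{LF}(j)$ is cut further at every position $\mathrm{LF}(q)$ with $\mathrm{SA}[q]$ the first position of a block, or with $\mathrm{SA}[q]=1$, where LF wraps to the terminator. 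Each such position becomes an explicit literal phrase, and the pieces between them are copy phrases whose source is the corresponding subinterval of $[i..j]$. The $d$ block starts (one of them is $\mathrm{SA}[q]=1$) add at most $d$ literals, and each literal splits one interval into at most two pieces. This gives at most $r+d$ copy phrases and $d$ literals, so at most $r+2d$ phrases.

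Next we check that this is a valid bidirectional macro scheme, meaning every position's value can be decoded. Follow the source pointers: a position $p=\mathrm{LF}(q)$ in a copy phrase points to $q=\Psi(p)$, so the pointer function is $\Psi$ restricted to the copied positions. Iterating $\Psi$ from any position increases $\mathrm{SA}$ by one per step within a block. Within $n$ steps we must therefore reach a position whose suffix begins at a block start, since the terminator's suffix lies in the last block and $\Psi$ cycles through all of $\mathrm{SA}$. More precisely, we reach a position $q'$ such that $\mathrm{LF}(q')$'s preimage crosses a block boundary, and that position is a literal. So no cycle of copies avoids a literal, and every position is determined. I expect this acyclicity argument to be the main point needing care. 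In particular, the wrap-around at the terminator must be handled, by treating position $\mathrm{SA}[q]=1$ as a block start, which it is.

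Finally, we invoke known results. Kempa and Prezza show that a bidirectional macro scheme with $b$ phrases yields a string attractor of size at most $b$: take the literal positions together with the first position of each copy phrase. Combined with the same authors' grammar construction, or Gagie et al.'s, this gives a grammar of size $O(\gamma\log(n/\gamma))$, which with $\gamma\le r+2d$ becomes $O((r+d)\log(n/(r+d)))$. The last step uses the monotonicity of $x\log(n/x)$ for $x\le n/e$, with the trivial case otherwise.
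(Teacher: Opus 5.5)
Your proposal is correct and follows the paper's proof essentially step by step. Like the paper, you place literals at the $d$ positions $\mathrm{LF}(q)$ with $\mathrm{SA}[q]$ a block start and cut the $r$ LF-images of BWT runs around them into copy phrases. You prove termination from the fact that following copies moves along $\Psi$ and increases $\mathrm{SA}$ by one. You then finish with $\gamma \le b$, the $O(\gamma\log(n/\gamma))$ grammar bound, and the monotonicity of $x\log(n/x)$.

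The only blemish is the wording of the termination step (``a position whose suffix begins at a block start'', ``$\mathrm{LF}(q')$'s preimage''). The chain actually stops at the literal whose suffix starts at the \emph{last} position of the current block, before it leaves that block, as the paper puts it. Your coarser observation also suffices: $\Psi$ is a single $n$-cycle, so the chain must hit a literal within $n$ steps.
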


\begin{proof}
Unless $\mathrm{SA}[q]$ is the first position of a block, $T[\mathrm{LF}(q)] = T[q]$, since $\mathrm{SA}[\mathrm{LF}(q)] = \mathrm{SA}[q] - 1$ is in the same block (taking positions cyclically, so that $\mathrm{SA}[\mathrm{LF}(q)] = n$ when $\mathrm{SA}[q] = 1$).
Let $T[p]$ copy $T[\Psi(p)]$ for every such position $p = \mathrm{LF}(q)$, and store $T[p]$ explicitly at the other $d$ positions, at which $\mathrm{SA}[p]$ is the last position of a block.
LF maps the positions of each BWT run to consecutive positions, in order, so cutting $[1..n]$ at the ends of the images of the $r$ runs and around the $d$ explicit positions gives at most $r + 2d$ phrases, each a copy of a substring of $T$ or a single explicit symbol.
Following copies from any position $p$ moves to $\Psi(p)$, whose suffix starts one position later in $S$, so it reaches an explicit position before leaving the block containing $\mathrm{SA}[p]$, and the scheme is a valid bidirectional macro scheme.
The size $\gamma$ of the smallest string attractor of a string is at most the number of phrases of any bidirectional macro scheme for it, and the string has a grammar of size $O(\gamma \log (n/\gamma))$~\cite{KP18}; since $x \log (n/x)$ is increasing for $x \leq n/e$ and $\gamma \leq r + 2d$, the bound follows.
\end{proof}

We say that a property constant on few blocks of $S$, such as documents, genomes and taxa, has \emph{textual locality}.
Textual locality gives the macro scheme of Observation~\ref{obs:grammar}, with $O(r + d)$ phrases, while contextual locality gives few runs, by Lemma~\ref{lem:runs}.
The characters themselves have contextual locality but not textual locality, so the BWT is run-length compressible, but LF does not map its runs to copies of themselves.
Taxa have both, so their tag array has few runs and also a macro scheme with at most $r + 2d$ phrases, which is smaller when $\rho \gg r$; genomes, with many per species, have mainly textual locality.
Cobas and Navarro~\cite{CN19} used the copies in the proof of Observation~\ref{obs:grammar} to grammar-compress the document array, as with the differential suffix array~\cite{GN07,GNP20}.
We instead run-length compress first, and grammar-compress only the sequence of the runs' taxa (their lengths come from the run starts), which has length $\rho$ rather than $n$; Observation~\ref{obs:grammar} does not directly bound this grammar.
On SILVA we build the grammar with RePair~\cite{LM00}, in the second author's implementation distributed with BigRePair~\cite{GIM+19}, which avoids deep rules, and encode it as Gagie et al.\ do~\cite{GIM+20}; to count, we find the first run of the interval by a predecessor search on the final sequence, descend through the grammar, and decode the following runs.
The runs' taxa then take 0.23\,GB instead of 0.63\,GB, and the index 1.04\,GB instead of 1.44\,GB, while classification takes 75.0\,$\mu$s per read instead of 66.0, with identical answers.

\section{Time and space}
\label{sec:tradeoffs}
\label{sec:experiments}

\subparagraph*{SILVA.}
Our default index for SILVA takes 1.44\,GB: 0.57\,GB for the RLCSA, 0.86\,GB for the tag array (19.1 bits per run) and 18\,MB for the lookup table.
Building it from the sequences takes minutes on the desktop: 347\,s for the BWT, with prefix-free parsing~\cite{BGK+19} (peak memory 5.1\,GB), a few seconds for the RLCSA and the lookup table, and about 6 minutes for the tag array.
With one thread it classifies a read in 66\,$\mu$s, excluding loading the index and reading the reads, and with 12 threads on 6 cores, sharing one copy of the index, 144\,800 reads per second.
Table~\ref{tab:others} compares our index with Kraken~2 and Tagger, run on the same machine and reads, and with Cliffy's LCA query, run on our reads on the server.
There our accuracy for the default configuration is on fewer reads and its time is measured differently, so these numbers differ from those above (Appendix~\ref{app:details}).
On the server, our default index takes 1.42\,GB and 146\,$\mu$s per read and reaches 93.70\%, while Cliffy's undigested index takes 19.58\,GB and its LCA query takes 637\,$\mu$s and reaches 85.75\% (86.45\% with Cliffy's scoring); building it took almost three days, over 2\,TB of temporary disk space and more than 200\,GB of memory.
On Cliffy's digested index, which takes 7.63\,GB, it reaches 84.15\% (85.53\%), against 92.08\% on the server for our digested index, which takes 0.49\,GB.
With Cliffy's scoring, both are up to 4 points below what Cliffy's paper reports for its LCA query on its own reads (87.5 to 90.2\% and 86.3 to 88.5\%), as are the rz-index's results with Cliffy's matches on our reads (Appendix~\ref{sec:rz}), so our reads seem slightly harder; we have not run Cliffy's approximate listing, for which its paper reports 93.3 to 94.8\% and 92.8 to 93.4\%.
Appendix~\ref{app:listing} gives the time--space--accuracy Pareto surface, with indexes down to 0.41\,GB, or 0.16\,GB over minimizer digests.

\begin{table}[t]
\centering
\caption{SILVA: index size, time per read with one thread and reads per second with 12 threads (both excluding loading), and genus-level accuracy (mean over the four regions).
The first, fourth and fifth rows are on the desktop, with accuracy on 5\,000 read pairs per region and time as total time over the number of reads (200\,000 pairs per region); our other rows use the mean of per-read timings (66.0\,$\mu$s for the default) and 200\,000 pairs.
Tagger's three accuracies count pairs with disagreeing mates as wrong, random or correct.
Cliffy's numbers are on our reads (5\,000 pairs per region) and on the server, where our default configuration takes 146\,$\mu$s per read with the same accuracy as here.}
\label{tab:others}
\resizebox{\textwidth}{!}{\begin{tabular}{lrrrc}
\toprule
classifier & space (GB) & time ($\mu$s) & reads/s, 12 threads & accuracy \\
\midrule
long MEMs, proportional credit & 1.44 & 53.1 & 144\,800 & 93.70\% \\
\quad grammar-compressed runs & 1.04 & 75.0 & & \\
\quad minimizer digests & 0.49 & 39.3 & & 92.48\% \\
Kraken~2 & 0.15 & 14.3 & 366\,900 & 79.26\% \\
Tagger & 0.82 & 70.4 & 82\,600 & 81.69 / 87.24 / 92.78\% \\
Cliffy, undigested, LCA query & 19.58 & 636.6 & & 85.75\% \\
Cliffy, digested, LCA query & 7.63 & 419.1 & & 84.15\% \\
\bottomrule
\end{tabular}}
\end{table}

\subparagraph*{Zymo.}
For the homopolymer-compressed Zymo reference, the tag array takes 1.47\,GB, a run-length compressed BWT (RLBWT) 1.00\,GB and an explicit 64-bit RLCSA 5.56\,GB; with the RLCSA and the longest MEMs we classify 4\,900 whole reads per second with 12 threads on the laptop (peak memory 8.0\,GB).
The prefix test needs no tags, and with the RLBWT alone takes 109\,s for the 720-base prefixes, against the 194\,s and 0.74 to 1.54\,GB that SPUMONI~2 reports on its machine.

\section{Conclusions}
\label{sec:conclusions}

Classifiers based on exact matches differ in which matches they use and in how a match splits its credit among the taxa containing it.
On short 16S reads, all the MEMs of at least a given length with proportional credit are the most accurate choice; on long noisy reads from whole genomes, each read's longest MEMs, found by BML in one pass, avoid the threshold and are both more accurate and faster.
Proportional credit favors taxa with much sequence in the database, which ties its accuracy to how well the database matches the sample.
Long MEMs and their counts can be computed fast in compressed space, without locating any occurrence, with BML on an RLCSA and a complete, run-length compressed tag array, built from the BWT in space proportional to the number of runs.

The main open question is how to make credit robust to differences between the database and the sample, including reads from taxa absent from the database, for example by correcting towards an estimated composition of the sample; we would also like to scale to GTDB.
It would be natural to bound the grammar of the runs in terms of $r$, $d$ and $\rho$.
We conjecture that tagging each character with the rank of its genome in a phylogenetic tree gives a differentially compressible tag array that yields counts at every rank.
We also plan to redo MONI-align~\cite{VRF+25} with long MEMs and tags giving reference positions~\cite{BGG+24,OO26}.

\appendix
\section{Experimental details}
\label{app:details}

This appendix lists what we believe is needed to reproduce our results with our implementation, KATKA (\url{https://github.com/TravisGagie/KATKA}).
All parameters not mentioned are defaults.
On SILVA, the \emph{default configuration} finds MEMs of at least $L = 30$ bases with BML on an explicit RLCSA with a lookup table of 10-mers, and counts their occurrences with a run-length compressed tag array, for proportional credit.

\subparagraph*{Software and data.}
Cliffy's experiment scripts (\url{https://github.com/oma219/cliffy-experiments}, commit \texttt{566f647}), its read simulator MicrobeMixer (\url{https://github.com/oma219/MicrobeMixer}, \texttt{6409f5d}) with ART 2.5.8~\cite{HLMM12} and seqtk 1.4; Big-BWT (\url{https://github.com/alshai/Big-BWT}, \texttt{5b00650}); sdsl-lite (\texttt{c32874c}) with libdivsufsort.
Our own C++17 code was compiled with g++ 15.2 and \texttt{-O3 -march=native -DNDEBUG}, and the scoring scripts use Python 3.14.
The reference is SILVA 138.1 SSU Ref NR99 (\texttt{SILVA\_138.1\_SSURef\_NR99\_tax\_silva.fasta}, with \texttt{tax\_slv\_ssu\_138.1.txt} and \texttt{.tre}).
All times were measured with one thread on an Intel Core i7-8700 (3.2\,GHz, 6 cores) with 16\,GB of RAM under Linux.

\subparagraph*{Running the code.}
From the root of the repository, \texttt{demo\_quick.sh} fetches and builds the dependencies at the commits above, downloads SILVA, builds the reference and the default index (\texttt{silva/build\_katka.sh}), and classifies 5\,000 read pairs per region that are included in the repository, with one thread and with all cores.
\texttt{demo\_full.sh} builds every index in this paper (\texttt{ALL=1 silva/build\_katka.sh}), downloads the read pairs we used (every 50th pair of our simulation, 200\,000 per region, from a release of the repository) and runs every experiment, each with its own script in \texttt{silva/}, and \texttt{pareto\_ft.py} computes the Pareto surface.
To classify reads with the default configuration and 8 threads, the command is \texttt{RZ\_COUNTS=1 rz-classify -j 8 -L 30 -l -C bac.csa -T bac.s1.tag -F 10 bac.rz reads.fq out.listings}.
With a digested index, add \texttt{-B bac.map} and use \texttt{-F 2}.
The output has Cliffy's listing format, extended with the number of occurrences in each genus, so our \texttt{rz\_score2.py}, or Cliffy's scripts without the counts, can score it.
Finally, \texttt{silva/run\_cliffy.sh} and \texttt{silva/run\_katka.sh} run Cliffy and our implementation on the same reads on one machine, first with one thread each and then with the same number of threads each, check that the parallel outputs equal the single-thread ones, and score both.

\subparagraph*{Reference text.}
We converted the FASTA to one line per sequence (\texttt{seqtk seq -U}) and ran Cliffy's \texttt{write\_silva\_genera.py} with the tree and rank files and \texttt{-n 10000}, which writes one file per genus, numbered in the left-to-right order of the SILVA tree (9\,118 files), and \texttt{doc\_to\_traversal.txt}.
As in Cliffy's pipeline, each file is reverse-complemented with \texttt{seqtk seq -r}, which also turns U into T.
We call genus $i$ document $i$.
In the text $S$ of Section~\ref{sec:index}, $G_i$ is the concatenation of the sequences of genus $i$ separated by \texttt{X}, and any character other than A, C, G, T (in either case) becomes \texttt{X}.
A text position is mapped to its document by a predecessor search over the document starts.
For Cliffy's scoring script we also need \texttt{trav\_to\_length.txt}, which gives for each genus its traversal string and the total length of its sequences.

\subparagraph*{Reads.}
For each of V1--V2, V3--V4, V4 and V4--V5 we ran the \texttt{simulate} command of MicrobeMixer's \texttt{sim\_16s\_reads.py} with the region's primer file, the SILVA FASTA and taxonomy, 10 million reads, and the aquatic abundance profile shipped with MicrobeMixer (\texttt{data/aquatic.tsv}, in which we replaced \texttt{Candidatus\_} by \texttt{Candidatus} followed by a space, to match SILVA 138.1).
This gives 10 million pairs of 250-base reads and a truth file \texttt{*\_seqtax.txt} per region.
The read files are grouped by genus, so we subsample regularly: for accuracy, pairs $0, 50, 100, \ldots$ (200\,000 pairs per region); for time, mate-1 reads $0, 5000, 10000, \ldots$ (2\,000 reads per region), which are among the former.
Each mate is classified on its own, as given (both strands are in the index).

\subparagraph*{Lookup tables.}
For each string of length $t$ over the alphabet (A, C, G, T, or the digest bytes) the table stores the BWT interval after searching for it, and for the sr-index also the toehold.
It is built when the classifier starts (\texttt{-F} $t$), in under a second.
A search for a string of length at least $t$ starts with a lookup of its last $t$ symbols.
If they do not occur, the search proceeds symbol by symbol from scratch, so that BML learns exactly where it fails.
Every configuration gives the same answers with and without a table.

\subparagraph*{Digests.}
Number A, C, G, T as 0--3 and let a $k$-mer's code be its $2k$-bit integer.
Its hash is the splitmix64 finalizer applied to the smaller of its code and its reverse complement's code.
In each maximal A/C/G/T run, every window of $w$ consecutive bases ($w - k + 1$ consecutive $k$-mers) selects \emph{all} its $k$-mers of minimum hash.
The selected $k$-mers, in text order and in their own orientation, form the digest of the run, after collapsing equal consecutive symbols.
Digests of consecutive runs are joined by \texttt{X}.
With $k = 4$ and $w = 11$, every one of the 256 codes occurs, but only 252 bytes are free (0--2 are reserved for terminators and \texttt{X} is the separator).
The rarest codes in the reference, taken in reverse-complement pairs until the rest fit (6 codes), therefore share one byte, and the others get bytes $3, 4, \ldots$ (skipping \texttt{X}) in increasing order of code.
Reverse complementing a digest reverses it and maps each byte to the byte of the reverse-complemented $k$-mer.
We digest $G_i$ and build $S$ from the digests as above.

\subparagraph*{Indexes.}
The BWT of $S$ (with one terminator) is built with Big-BWT.
The RLBWT stores the run heads in a Huffman-shaped wavelet tree and the run starts in Elias--Fano coded bitvectors (one overall and one per character).
The RLCSA stores, for each character $c$, the BWT positions of the starts of the runs of $c$ and the cumulative lengths of those runs, either as plain 32-bit arrays or with Elias--Fano coding.
A backward step by $c$ is a predecessor search in the list of $c$, and gives the same intervals as the RLBWT.
The tag array is built without a suffix array: one pass of LF over the text gives the suffix-array samples at the BWT run boundaries, as for the r-index, and from them $\phi$ streams the suffix array from the last BWT position to the first.
Each entry is mapped to its document by a predecessor search over the document starts, and consecutive positions with the same document are merged into runs (the terminator's position gets the last document), so the memory needed is proportional to the number of BWT runs plus the number of tag runs.
The r-index~\cite{GNP20} stores the suffix-array samples at the run boundaries and supports $\phi$ with a predecessor structure on them.
Our sr-index removes samples as in Cobas et al.~\cite{CGN21,CGN25} with parameter $s$, and $s = 0$ means no subsampling.
Reported sizes are file sizes in units of $10^9$ bytes.
A configuration's space is the sum of the structures it uses: the RLBWT or RLCSA used for backward search; the tag array (plus the RLBWT, for LF, if it is sampled and the search uses an RLCSA); the sr-index samples (plus the RLBWT, for reading run heads, if the search uses an RLCSA); and the lookup table.

\subparagraph*{BML without digests.}
Let $R[0..m - 1]$ be a maximal A/C/G/T run of the read, skipped if $m < L$; set $x = 0$ and, while $x + L \le m$:
search backward for $R[x..x + L - 1]$ from its right end; if the step for $R[j - 1]$ fails, set $x = j$ and repeat.
Otherwise continue backward from $x$ to the leftmost $a$ such that $R[a..x + L - 1]$ occurs, then search forward from $a$ (backward on the reverse complement) to the largest $e$ such that $R[a..e - 1]$ occurs.
Report the MEM $R[a..e - 1]$ (the genera of its occurrences are those of its reverse complement's, whose interval the forward search gives), and set $x = e - L + 1$.

\subparagraph*{BML with digests.}
For a maximal A/C/G/T run $R[0..m - 1]$ of the read with $m \ge \max(L, w)$, digest it as above, and for each window $y$ (covering bases $y..y + w - 1$) let $\mathit{lo}[y]$ and $\mathit{hi}[y]$ be the positions in the digest $\hat{R}$ of the first and last symbols it selects; both are non-decreasing in $y$.
Set $x = 0$ and, while $x + L \le m$: search backward for $\hat{R}[\mathit{lo}[x]..\mathit{hi}[x + L - w]]$; if the step for $\hat{R}[j - 1]$ fails, set $x$ to the smallest $y$ with $\mathit{lo}[y] \ge j$ (stop if there is none) and repeat.
Otherwise extend backward to the leftmost $a'$ and then forward to the largest $e'$ such that $\hat{R}[a'..e' - 1]$ occurs.
The MEM's span in the read is the interval of bases $[\mathit{left}, \mathit{right}]$, where $\mathit{left}$ is the smallest $y$ with $\mathit{lo}[y] \ge a'$, and $\mathit{right} = y_R + w - 2$ for the smallest $y_R$ with $\mathit{hi}[y_R] \ge e'$ ($\mathit{right} = m - 1$ if there is none).
Record the MEM with its span, and set $x = \max(x + 1, y_R - (L - w))$.
Finally, discard every recorded MEM whose span is contained in another's, and report the rest with their spans.

\subparagraph*{Answers and scoring.}
The listing answer for a MEM $M$ is the set $D$ of documents containing it, with the number $c_g$ of occurrences in each ($\texttt{RZ\_COUNTS=1}$).
With proportional credit, each $g \in D$ gets weight $|M| c_g / \sum_{h \in D} c_h$, and with even credit $|M|/|D|$.
With the range scheme the answer is the pair $\{g_\ell, g_r\}$ of the smallest and largest documents containing $M$, and each document from $g_\ell$ to $g_r$ gets weight $|M|/(g_r - g_\ell + 1)$.
This is Cliffy's LCA scoring (``method0'' in its scripts), since documents are in tree order.
Here $|M|$ is the length of $M$ in bases (of its span, with digests).
Weights from both mates are summed per genus (through \texttt{doc\_to\_traversal.txt}), with Cliffy's \texttt{DocProfReadSet} class handling the truth files.
A pair whose true genus is missing or ambiguous in the truth file counts as correct, as in Cliffy's script, under both of our measures.
Otherwise a pair is correct under Cliffy's rule if its true genus has the maximum weight, possibly tied, or if it has no weight at all and its true genus is not ``uncultured''.
It is \emph{strictly} correct if its true genus alone has the maximum weight.
We checked that our re-implementation of the scoring gives the same numbers as Cliffy's \texttt{classify\_silva\_readset.py} on a sample of 1\,500 pairs.

\subparagraph*{Cliffy.}
The second author built Cliffy's undigested and digested indexes with its authors' settings, and ran them and our default configuration on our 5\,000 read pairs per region, with \texttt{silva/run\_cliffy.sh} and \texttt{silva/run\_katka.sh}, on a server with two Intel Xeon E5-2407 v2 processors (8 cores in all, 2.4\,GHz) and 256\,GB of RAM.
Cliffy's two-pass construction of the undigested index writes about 2.3\,TB of temporary files (Cliffy's Table~1), more than the server's home disk holds: a first attempt failed after 40 hours, when the disk filled, having used 240\,GB of memory, and the build succeeded in 69 hours with the temporary files on a larger disk.
Classifying with it took 37\,GB of memory.
Times per read are with one thread or process, excluding loading the index (50\,s for Cliffy's undigested index), and accuracies are for Cliffy's LCA query (``method0''), scored as for our implementation.

\subparagraph*{Other classifiers.}
We ran Kraken~2~\cite{WLL19} and Tagger~\cite{DAF+25} on the same reads and the same machine as our implementation (\texttt{silva/run\_kraken2.sh} and \texttt{silva/run\_tagger.sh}).
We built Kraken~2's database with its own SILVA script (\texttt{kraken2-build --special silva}, default $k = 35$ and minimizer length 31), as Cliffy's authors did.
It downloads the same release, SILVA 138.1 SSU NR99, but indexes all of its sequences, including those not annotated down to the genus level, with SILVA's taxonomy.
Kraken~2 classifies the two mates of a pair together (\texttt{--paired}), and we scored its classifications with the part of Cliffy's script for Kraken~2, under which a pair is correct at the genus level only if Kraken~2 assigns it to its true genus.
Pairs assigned to a higher taxon, to the root or to nothing count as wrong (Cliffy's rule for pairs with no votes applies only to its own methods).
Counting also pairs assigned to an ancestor of their true genus below the domain, which that script calls vague positives, Kraken~2 reaches 84.3\%.
Tagger's authors instead built Kraken~2 with a flat hierarchy, in which a read is assigned to a genus, to the root or to nothing, so our numbers for Kraken~2 are not comparable with theirs.
For Tagger, we built an index of the same 402\,378 sequences as ours, with each sequence's genus as its tag and one strand only (Tagger handles reverse complements when classifying), and used its default parameters ($L = 25$, a 10-mer lookup table).
With $L = 30$, as for our default configuration, Tagger was no more accurate (81.05\%, 86.88\% and 92.72\% under the three rules), so we report its defaults.
Tagger assigns each read a single genus, so we classified the two mates separately and let each give its length to its genus, as we do for MEMs.
A pair whose mates are assigned different genera is then a tie, which is wrong under the strict measure, and a pair with one mate unclassified gets the other's genus.
(Tagger's authors classified single reads.)
For both tools, the times are with one thread and exclude loading the index: Kraken~2 reports its own classification time, and for Tagger we subtracted the running time of an invocation on a single read, which is dominated by loading the index.
Tagger breaks ties between genera at random, with a separate random number generator in each thread, so its outputs with one thread and with 12 differ slightly, and we scored the single-thread outputs.
If a pair's mates are assigned different genera, breaking the tie at random gives the true genus with probability 1/2 when one of them is right, so the expected accuracy is the mean of the accuracies with such pairs counted as wrong and as correct.
We measured speed separately, with \texttt{silva/run\_scale.sh}, on 200\,000 read pairs per region with 1, 6 and 12 threads, because on 5\,000 pairs a run with 12 threads takes well under a second and start-up dominates: on 5\,000 pairs, Kraken~2 was no faster with 12 threads than with one.
With 6 and 12 threads, our implementation was 5.3 and 7.7 times faster than with one, Kraken~2 4.3 and 5.2 times, and Tagger 4.5 and 5.8 times.
Our time per read in Table~\ref{tab:others}, 53.1\,$\mu$s, is its total classification time divided by the number of reads.
The 66.0\,$\mu$s elsewhere in the paper, which we use to compare our configurations, is measured differently, as the mean of per-read timings on 2\,000 reads per region, and we have not investigated the difference.
On the 5\,000 read pairs per region, the default configuration has strict accuracy 93.70\%, against 93.77\% on the 200\,000 pairs per region of Section~\ref{sec:accuracy}.

\subparagraph*{Credit schemes.}
The SILVA column of Table~\ref{tab:credit} scores one set of listings, made with \nolinkurl{RZ_COUNTS=1} and $L = 30$ for every 50th read pair, under each scheme with \nolinkurl{silva/rz_score_schemes.py}: \texttt{freq} is proportional credit, \texttt{list} is even credit, \texttt{norm} divides each count by twice the number of sequences of the genus, \texttt{one} gives the whole credit to one genus drawn with probability proportional to its count, \texttt{cliffs} gives each genus containing the MEM a uniformly random value and splits the credit evenly over the left-to-right and right-to-left maxima of these values in genus order, and \texttt{lca} spreads it over the range.
Random choices use a fixed seed.
The range row is from the rz-index run of Appendix~\ref{sec:rz}, which gives the same answers as \texttt{lca}.
The Zymo column scores the whole-read listings in the same way with \nolinkurl{tools/credit_schemes.py}, at the species level and per read, and with \nolinkurl{tools/norm_credit.py}, which divides each count by the number of reference sequences of its species, or by their total length.
In every case a read or read pair is correct only if its true taxon alone has the largest weight.

\subparagraph*{Zymo.}
We used the files that N.~Brown deposited on Zenodo (record 23251764): \nolinkurl{microbial_ref.fa.gz}, 14\,280 RefSeq sequences of seven bacterial species; \nolinkurl{microbial_reads.fa.gz}, 581\,802 ONT reads with their species from minimap2 alignments (\nolinkurl{microbial_reads.fa.class} and \nolinkurl{class.txt}); and \nolinkurl{yeast_reads.fa.gz}, 10\,245 yeast reads from the same run.
We tagged each sequence with its species, wrote one file per species, and split it into files of at most 1\,GB, keeping the sequences in their original order, so that every file is a dataset of a single species.
\texttt{rz-prep -d -s} (with \texttt{-U} for homopolymer compression) builds the text with both strands; Big-BWT builds the BWT of each dataset, single-threaded, and pfp-merge~\cite{BGK+19} merges them; \texttt{rz-build} builds the r-index samples, \texttt{rz-tagbuild -S} the species tag array, and \texttt{rz-csabuild} the explicit RLCSA.
With homopolymer compression, $S$ has 39.28 billion characters, 73\% of the 53.74 billion without it, the BWT has 338.1 million runs (436.6 million without compression) and the tag array 448.7 million runs.
On the laptop, the largest Big-BWT run (\emph{E.~coli}, 18.2\,Gbp after compression) took 982\,s and 7.75\,GB, the merge 1\,240\,s and 7.3\,GB, the r-index samples 6\,436\,s and 6.0\,GB, and the tag array 26\,052\,s and 8.9\,GB, with the version of the construction that keeps all the runs in memory.
We classify with \nolinkurl{RZ_COUNTS=1 rz-classify -U -L 30 -D 14 -j 12}, which reads FASTA or FASTQ; $L$ and the filter's threshold apply to the compressed read, and reported spans are mapped back to the original read.
Without homopolymer compression we omit \nolinkurl{-U} in both programs and use \nolinkurl{-D 16}.
For the longest MEMs we add \nolinkurl{-X} with \nolinkurl{-L 1}, or with \nolinkurl{-L 20 -K zy.k20.kbf} for the KeBaB filter, built with \nolinkurl{rz-kbbuild} ($k = 20$, target false-positive rate 0.1) in 2\,736\,s and 0.15\,GB; \nolinkurl{tools/maxmem_roc.py} scores each read by the compressed length of its longest MEM, computed from the read, and prints the sensitivity, specificity and species accuracy for each threshold.
For SPUMONI~2's Table~2 we took the first 180, 360, 540 and 720 bases of each bacterial and yeast read: a prefix is accepted if it has at least one MEM of length at least $L$ that passes the filter, so the sensitivity is the fraction of bacterial prefixes accepted and the specificity the fraction of yeast prefixes rejected.
The species accuracy of the prefixes is with proportional credit.

\subparagraph*{Timing.}
Times are wall-clock per read, measured inside the classifier, with the index already loaded and the reads already read from disk.
They include digesting the read, finding the matches and answering them.
We report the mean over the four regions of the mean over the 2\,000 reads, all measured in a single run of \texttt{run\_pareto\_ft.sh} (with \texttt{COUNTS=1}), which also checks that every configuration gives the same answers with and without a lookup table.
For every configuration, we checked that all backends and sampling rates give identical outputs.

\subparagraph*{Parallelism and further systems optimizations.}
With \texttt{rz-classify -j}~$p$, $p$ threads share one copy of the index: each takes chunks of 256 reads from the input, and the answers are written back in input order, so the output is identical to that of a single thread (we checked this for every listing method, with and without lookup tables and digests).
The index structures are read-only during classification, and the scratch space for document listing, counting and locating, and the statistics counters, are kept per thread.
Alternatively, \texttt{silva/run\_split.sh}~$p$ splits the read file into $p$ contiguous slices, classifies them with $p$ processes and concatenates the outputs.
This needs no threads but loads $p$ copies of the index.
Beyond parallelism, we have not tuned our code at the systems level.
Backward search and LF-mapping are dominated by cache misses that depend on each other, so the natural next steps are interleaving the searches for several reads and prefetching their next memory accesses, to hide the latency of each miss behind work on other reads; huge pages, to reduce TLB misses on indexes of a few gigabytes; and memory-mapping the indexes, so that processes share them and start immediately.
Graphics processors suit data-parallel work better than pointer chasing, but they can also hide latency by running very many searches at once, so they might pay off for large batches of reads.

\section{Minimizer digests}
\label{sec:digest}

Like Cliffy, our implementation can index minimizer digests~\cite{RHH+04,ARG+23} of the sequences instead of the sequences themselves.
We digest each sequence and its reverse complement as Cliffy does, with $k = 4$ and $w = 11$, but symmetrically, so that the digest of a reverse complement is the reverse complement of the digest: we use a canonical hash, select all minimizers tied for the minimum in a window, and collapse equal consecutive minimizers.
Since 4 byte values are reserved for terminators and the separator, the 256 distinct minimizers fit in a byte only once the 6 rarest share one.
The digested text has 262 million symbols and its BWT has 35.9 million runs.
To classify a read, we digest it the same way, keeping only the minimizers selected by windows lying entirely inside each match.
A window inside a match in the read is also a window inside the corresponding occurrence in the database, and selects the same minimizers there, so the digest of every DNA match occurs in the digested database.

We keep the threshold $L$ in bases rather than in digest symbols, since consecutive digest symbols can stand for very different numbers of bases.
A window of the read covers $w$ bases and selects a range of digest symbols, and these ranges move monotonically to the right as the window does.
A DNA match $R[x..x + L - 1]$ contains exactly the windows starting in $[x, x + L - w]$, so its digest is the range of symbols they select.
BML searches backward for this range of symbols.
If the search fails at symbol $j$, then every window start $x' \geq x$ whose first window selects a symbol no later than $j$ also fails, and we jump to the first window that selects only symbols after $j$.
If the search succeeds, we extend the match to a maximal digested MEM and compute its span in bases: the longest DNA interval of the read all of whose windows select symbols within the MEM.
This finds every digested MEM whose span is at least $L$ bases and is not contained in another digested MEM's span.
We checked this against brute-force enumeration on 3\,000 reads for every $L$ from 15 to 50.
The weight of a MEM is the length of its span.

\begin{table}[t]
\centering
\caption{Classification over minimizer digests ($k = 4$, $w = 11$), with tags and proportional credit, against the default configuration and Cliffy's digested index.
All use a lookup table ($t = 2$ digest symbols with digestion, $t = 10$ bases without).}
\label{tab:digest}
\small
\begin{tabular}{lrrrrrr}
\toprule
 & space & \multicolumn{5}{c}{time per read ($\mu$s)} \\
\cmidrule{3-7}
configuration & (GB) & $L = 15$ & $L = 20$ & $L = 30$ & $L = 40$ & $L = 50$ \\
\midrule
\multicolumn{2}{l}{\emph{digests}: strict accuracy} & 92.47\% & 92.47\% & 92.48\% & 92.36\% & 92.05\% \\
\quad explicit RLCSA, tags & 0.49 & 56.6 & 54.7 & 39.3 & 32.7 & 25.2 \\
\quad Elias--Fano RLCSA, tags & 0.27 & 79.0 & 76.6 & 61.9 & 50.7 & 44.4 \\
\quad RLBWT, tags sampled with $s = 4$ & 0.19 & & & 220.1 & 147.0 & 114.3 \\
\addlinespace
\multicolumn{2}{l}{\emph{undigested}: strict accuracy} & 93.74\% & 93.77\% & 93.77\% & 93.48\% & \\
\quad default configuration & 1.44 & 88.6 & 79.2 & 66.0 & 57.3 & \\
\midrule
\multicolumn{7}{l}{\emph{Cliffy}~\cite{ABL25}, digests, approximate listing: 92.8--93.4\% with their scoring and reads} \\
\quad & 4.27 & \multicolumn{5}{c}{77--82} \\
\bottomrule
\end{tabular}
\end{table}

Table~\ref{tab:digest} shows that digestion makes the index three times smaller (0.49 instead of 1.44\,GB with an explicit RLCSA and tags) and classification 1.7 times faster with $L = 30$ (39.3 instead of 66.0\,$\mu$s), at a cost of 1.3 points of accuracy (92.48 instead of 93.77\%).
The digested tag array has 81.0 million runs, 2.26 times as many as the BWT, and takes 0.19\,GB (18.3 bits per run), plus 0.03\,GB for the range-minimum structure if listing.
Over digests, an Elias--Fano RLCSA is only 1.6 times slower than an explicit one, so it gives a 0.27\,GB index that is still faster than the undigested default, and sampled tag arrays on the RLBWT give the smallest indexes we built, 0.16 to 0.19\,GB.
Once the sequences are digested, building the index with an explicit RLCSA takes about 4 minutes: 132\,s for the BWT (peak memory 2.0\,GB), a few seconds for the RLCSA, and 79\,s for the tags.
Cliffy's paper reports a digested index of 4.27\,GB, almost 9 times larger, about twice as slow per read on its server, with 92.8 to 93.4\% accuracy for approximate listing with its own scoring on its own reads.
On our reads, Cliffy's LCA query on its digested index reaches 84.15\% strict accuracy (Table~\ref{tab:others}), against 92.08\% for our digested index on the same reads and server (92.48\% on 200\,000 read pairs per region).

\section{Alternatives we considered}
\label{app:alternatives}
\label{app:negative}

This appendix explains the choices behind our default configuration, by describing the alternatives we tried along the way, and some we have not tried yet.
Most of them were worse on SILVA, which consists of short, highly conserved sequences with many per genus.
Some might work better on other datasets, such as whole genomes, where the number of occurrences of matches and the compressibility of tag arrays could differ greatly.

\subsection{Which matches}
\label{app:matches}

Cliffy cuts each read into \emph{Ziv--Merhav phrases} (ZMPs), the phrases of the cross parse of Section~\ref{sec:background}: each is the longest string ending at its position that occurs in the database, and each new backward search starts at the character where the previous one failed.
The same parse underlies Relative Lempel--Ziv compression~\cite{KPZ10}, but there each phrase also stores the position of a source, which is needed to decompress.
For classification only the phrases matter, so ZMP is the more appropriate name.
Our first experiments compared ZMPs with all the MEMs of each read, found by Li's forward--backward algorithm~\cite{Li12}, which alternates backward and forward extensions and starts each search one position after the end of the previous MEM, with the range scheme.
MEMs were more accurate than ZMPs by 0.4 to 1.3 points at the genus level (with Cliffy's scoring), both with and without digestion.
However, reads have many short MEMs, short MEMs occur many times in the database, and listing the genera containing them is expensive: with a subsampled r-index (Appendix~\ref{app:listing}), locating all the occurrences of all the MEMs of a read took 35 to 78\,ms.

Table~\ref{tab:L} shows how the threshold $L$ of BML affects accuracy with even credit.
With $L = 15$, the accuracy of the range scheme (87.3\% with Cliffy's scoring; not shown in the table, which is for even credit) is the same as with all the MEMs found by forward--backward search (87.2\%), while listing with the sr-index becomes more than 100 times faster (289\,$\mu$s per read instead of 35 to 78\,ms, with sampling parameter $s = 4$ and an explicit RLCSA).
Accuracy is almost unchanged up to $L = 30$, drops by about 2 points at $L = 50$ and collapses for larger $L$, as reads start to have no MEM that long.
Most of the gain in time comes from not listing the short MEMs, but BML also searches faster: forward--backward search that lists only the MEMs of length at least $L$ takes 133 to 146\,$\mu$s per read without digestion, almost independently of $L$, against 77 to 136\,$\mu$s for BML (unsampled tags, explicit RLCSA, no lookup table, $L$ from 40 down to 15), and remains slower with a lookup table (for example, 75 against 61\,$\mu$s with $L = 30$ and a 12-base table).

\begin{table}[t]
\centering
\caption{Genus-level accuracy with BML for various thresholds $L$ (mean over the four regions), listing the genera containing each MEM, with even credit.
``Strict'' counts ties and unclassified read pairs as wrong; ``theirs'' is Cliffy's scoring.
The last column is the fraction of unclassified read pairs.}
\label{tab:L}
\begin{tabular}{rccccccc}
\toprule
 & \multicolumn{3}{c}{undigested} & & \multicolumn{3}{c}{minimizer digests} \\
\cmidrule{2-4} \cmidrule{6-8}
$L$ & strict & theirs & uncl. & & strict & theirs & uncl. \\
\midrule
15 & 90.56\% & 91.59\% & 0.0\% & & 88.93\% & 89.97\% & 0.0\% \\
20 & 90.53\% & 91.68\% & 0.0\% & & 88.91\% & 89.98\% & 0.0\% \\
30 & 90.28\% & 91.83\% & 0.0\% & & 88.84\% & 90.03\% & 0.0\% \\
40 & 89.61\% & 91.93\% & 0.0\% & & 88.55\% & 90.12\% & 0.0\% \\
50 & 88.27\% & 92.01\% & 0.2\% & & 87.91\% & 90.27\% & 0.0\% \\
75 & 78.05\% & 87.46\% & 7.6\% & & 82.59\% & 89.91\% & 2.3\% \\
100 & 56.85\% & 66.13\% & 31.8\% & & 67.58\% & 77.77\% & 18.0\% \\
\bottomrule
\end{tabular}
\end{table}

The table also shows why we prefer the strict accuracy: with Cliffy's scoring, accuracy appears to \emph{increase} up to $L = 50$, but only because more read pairs end in ties among a few genera.
Even credit is also clearly more accurate than the range: with $L = 15$, by 3.5 points (strict) without digestion and by 3.6 points with it.

\subsection{How to give credit}
\label{app:freq}
\label{sec:norm}

\begin{table}[t]
\centering
\caption{Genus-level accuracy (strict / Cliffy's scoring, mean over the four regions) with even credit to each genus containing a match, credit proportional to the number of occurrences in each genus, and credit proportional to the fraction of each genus's sequences (both strands) containing the match.}
\label{tab:freq}
\begin{tabular}{lccc}
\toprule
 & even & proportional & normalized \\
\midrule
digests, $L = 15$ & 88.93 / 89.97\% & 92.47 / 92.49\% & 91.36 / 91.36\% \\
digests, $L = 30$ & 88.84 / 90.03\% & 92.48 / 92.51\% & 91.34 / 91.34\% \\
undigested, $L = 15$ & 90.56 / 91.59\% & 93.74 / 93.75\% & 92.92 / 92.92\% \\
undigested, $L = 30$ & 90.28 / 91.83\% & 93.77 / 93.79\% & 92.76 / 92.76\% \\
\bottomrule
\end{tabular}
\end{table}

Even credit, as in Cliffy's approximate listing, was our first choice, since it treats each genus containing a match alike.
Table~\ref{tab:freq} shows that proportional credit (Section~\ref{sec:accuracy}) is about 3.5 points more accurate, both with and without digestion, and that it almost eliminates ties.
Proportional credit favors genera with many sequences in SILVA, since they have more occurrences of everything.
To separate this effect from that of weighting each genus by how much of it contains the match, we also tried dividing each count $c_g$ by $2 n_g$, where $n_g$ is the number of sequences in genus $g$ and the factor 2 accounts for the two strands, so that each genus gets credit in proportion to the fraction of its sequences containing the match.
This normalized credit gains about 2.5 points of strict accuracy over even credit, but about 1 point less than proportional credit.

This does not mean that normalizing is a bad idea in general.
MicrobeMixer chooses the genus of each simulated read according to the abundances in an aquatic biome, and only then a sequence of that genus, so our reads are not drawn in proportion to how well each genus is represented in SILVA.
Proportional credit then acts as a prior that a genus with many sequences in the database is more likely, and it helps only insofar as the genera that are abundant in the simulated samples are also well represented in SILVA, which is plausible.
On samples whose composition differs from that of the database, the prior could hurt, and normalizing, or correcting towards a known or estimated composition of the sample, could be better.
A direct test would be to simulate reads by choosing genera uniformly, regardless of their representation in the database, and then sequences within each genus: if the advantage of proportional credit comes from the prior, it should shrink or disappear, while both kinds of weighting should still beat even credit if the rest of the gain comes from weighting each genus by how much of it contains the match.
We leave this test for future work.
On the Zymo reads, where the reference has up to 30 times as many sequences of one species as of another, normalizing by the number of sequences or by their total length did not help (Section~\ref{sec:accuracy}), but almost every long MEM there occurs in a single species.

\subsection{How to list and count}
\label{app:listing}
\label{sec:sr}
\label{sec:rlcsa}

\subparagraph*{Subsampled r-indexes.}
Once we have the BWT interval of a MEM, the obvious way to find the genera containing it, and the number of occurrences in each, is to locate all its occurrences with an r-index~\cite{GNP20} and map each to its genus.
The r-index stores two samples of the suffix array per run of the BWT.
The subsampled r-index (sr-index)~\cite{CGN21,CGN25} discards samples that are close in the text to others, so that it stores $O(\min(r, n/s))$ samples for a parameter $s$ while locating each occurrence in $O(s)$ time.
We first stored the BWT as an RLBWT, with the run heads in a Huffman-shaped wavelet tree and the run starts in Elias--Fano coded bitvectors.
With $s = 16$ and no lookup table, this index takes only 0.41\,GB and classifies a read in 477\,$\mu$s with $L = 30$.
Building it from the sequences took under 9 minutes on our desktop (347\,s for the BWT, 119\,s for the r-index samples and 43\,s to subsample them), and it is the smallest undigested index on our Pareto surface.

\subparagraph*{RLCSAs.}
Backward search on an RLBWT computes rank queries on the BWT, which become slower as the alphabet grows, and this matters especially for minimizer digests, whose alphabet has 256 symbols.
The RLCSA (Section~\ref{sec:index}) avoids this, and the sr-index's subsampling works the same way on an RLCSA, as in Cobas et al.'s sr-csa~\cite{CGN25}.
We still use the RLBWT's run heads to read characters when locating.
Building an RLCSA from the BWT takes a few seconds, and an explicit RLCSA makes every configuration two to four times faster than the RLBWT.

\subparagraph*{Tag arrays.}
A tag array needs no suffix-array samples, and listing or counting with it costs time proportional to the number of genera or runs in the interval, rather than to the number of occurrences, so it is much faster than locating.
It is also larger, because the tags have more runs than the BWT.
Tagger~\cite{DAF+25} stores the genus of only the last position of each BWT run, which suffices for its single-genus toehold but not for listing.
Most of the space of our tag array is in the genera of its runs, so we tried storing them for only some runs, in the spirit of subsampled r-indexes~\cite{CGN25} and of Olbrich and Ohlebusch's sampled tag arrays~\cite{OO26}.
Following Olbrich and Ohlebusch, let $e(i)$ be the run containing the LF-mapping of the first position of run $i$.
LF moves to the previous position of the text, so run $e(i)$ has the same genus as run $i$, unless the first position of run $i$ is the first position of a block.
We store the genera of those runs, of which there are fewer than genera, since only some genus starts are the first position of a run.
The graph with edges $i \to e(i)$ has out-degree 1.
Unlike in Olbrich and Ohlebusch's setting, where tags strictly decrease along edges, it may have cycles (within a genus), so we also store the genus of one run on each cycle.
We then choose the other sampled runs greedily so that every run reaches a sampled run in fewer than $s$ steps.
On the digest there are 7\,467 block boundaries and 11\,141 cycles, and we sample 24\% of the runs with $s = 4$ and 5\% with $s = 16$, which shrinks the tags from 0.19\,GB to 0.09\,GB and 0.06\,GB, respectively (without the range-minimum structure).
Sampling did not help much, and grammar-compressing the runs instead works better (Section~\ref{sec:tags}).
Computing LF needs access to the BWT, which an RLCSA does not provide, so a sampled tag array needs the RLBWT as well, which cancels most of its savings unless the RLBWT is used for backward search anyway.
Moreover, counting with sampled tags needs the genus of every run in the interval, each recovered by following LF.

\subparagraph*{The Pareto surface.}
A configuration is on the time--space--accuracy Pareto surface if no other configuration is at least as good in all three measures and better in at least one.
Table~\ref{tab:pareto} lists the configurations on this surface, out of all combinations of backend (RLBWT, Elias--Fano or explicit RLCSA), listing structure (sr-index with $s = 0, 4, 16$, tags with $s = 1, 4, 16$, or, without digestion and with $L = 30$ or 40, tags with grammar-compressed runs), lookup table ($t = 0, 10, 12$ bases or $0, 2$ digest symbols) and $L$ (15 to 40 bases, or to 50 with digests), all with proportional credit, so the space of a tag array does not include its range-minimum structure.
No configuration with $L < 30$ is on the surface, since $L = 30$ is at least as accurate and faster.
Without digestion, sr-indexes give the smallest configurations, run-length compressed tag arrays the fastest, and grammar-compressed runs most of those in between.
With digestion, only tag arrays appear.

\begin{table}[t]
\centering
\caption{Configurations on the time--space--accuracy Pareto surface (time per read in $\mu$s, by threshold $L$), with proportional credit.
EF stands for Elias--Fano, gramm.\ runs for tags with grammar-compressed runs (Section~\ref{sec:tags}), $s$ is the sampling parameter of the sr-index or of the tags, and $t$ is the length of the strings in the lookup table: $t = 2$ digest symbols with digestion and, unless stated otherwise, $t = 10$ bases without.
The accuracy depends only on the digestion and on $L$ (strict, mean over the four regions).}
\label{tab:pareto}
\small
\begin{tabular}{lrrrr}
\toprule
configuration & space (GB) & $L = 30$ & $L = 40$ & $L = 50$ \\
\midrule
\multicolumn{2}{l}{\emph{minimizer digests}: strict accuracy} & 92.48\% & 92.36\% & 92.05\% \\
\quad RLBWT + tags, $s = 16$ & 0.16 & 735.6 & 361.4 & 225.1 \\
\quad RLBWT + tags, $s = 4$, no table & 0.18 & 240.1 & 162.0 & 125.6 \\
\quad RLBWT + tags, $s = 4$ & 0.19 & 220.1 & 147.0 & 114.3 \\
\quad EF RLCSA + tags & 0.27 & 61.9 & 50.7 & 44.4 \\
\quad explicit RLCSA + tags & 0.49 & 39.3 & 32.7 & 25.2 \\
\midrule
\multicolumn{2}{l}{\emph{undigested}: strict accuracy} & 93.77\% & 93.48\% & \\
\quad RLBWT + sr-index, $s = 16$, no table & 0.41 & 476.9 & 389.5 & \\
\quad RLBWT + sr-index, $s = 16$ & 0.46 & 315.1 & 265.9 & \\
\quad RLBWT + tags, $s = 4$ & 0.61 & 227.8 & 197.6 & \\
\quad RLBWT + gramm.\ runs & 0.64 & 211.5 & 186.6 & \\
\quad RLBWT + tags, $s = 4$, $t = 12$ & 0.88 & 216.2 & 185.8 & \\
\quad RLBWT + gramm.\ runs, $t = 12$ & 0.90 & 199.1 & 177.3 & \\
\quad explicit RLCSA + sr-index, $s = 16$, no table & 0.98 & 193.9 & 145.3 & \\
\quad explicit RLCSA + gramm.\ runs, no table & 1.02 & 104.0 & 85.0 & \\
\quad explicit RLCSA + gramm.\ runs & 1.04 & 75.0 & 62.4 & \\
\quad explicit RLCSA + gramm.\ runs, $t = 12$ & 1.31 & 71.0 & 60.0 & \\
\quad explicit RLCSA + tags (default) & 1.44 & 66.0 & 57.3 & \\
\quad explicit RLCSA + tags, $t = 12$ & 1.71 & 62.9 & 54.4 & \\
\bottomrule
\end{tabular}
\end{table}

\subsection{Counting with rank queries}
\label{app:rank}

Instead of scanning the tag runs in a MEM's BWT interval to count its occurrences in each taxon, we could list the taxa with Muthukrishnan's algorithm and then count the occurrences of each with two rank queries on the tag array, as Sadakane's document listing allows~\cite{Sad07} and as Cobas, M\"akinen and Rossi do on a grammar-compressed document array~\cite{CMR20}.
This takes time proportional to the number of taxa containing the MEM, times the cost of a rank query, rather than to the number of tag runs in the interval.
It needs rank support on the run-length compressed tag array, for example, for each taxon, the sorted list of the starts of its tag runs with their cumulative lengths, like the lists of our RLCSA, so that a rank query is a predecessor search in one list.
It can pay off only when intervals span many tag runs of few taxa, as the intervals of short or common MEMs can on references with many genomes per taxon.
We implemented it, with explicit lists and jump tables as in our RLCSA, and compared it with scanning on the BWT intervals of the MEMs of our SILVA timing reads and of 1\% of the Zymo reads; both methods give identical counts.
On SILVA, with $L = 30$, it takes 1.32\,$\mu$s per MEM against 0.92\,$\mu$s for scanning, and its lists take 3.0\,GB against 0.97\,GB for the whole tag array: intervals with many tag runs also contain many genera, so listing alone costs almost as much as scanning.
On Zymo it wins for the rare intervals with 64 or more tag runs (1.8\,$\mu$s against up to 10\,$\mu$s), which contain many runs of few species, but these are fewer than 0.03\% of the intervals, and 99\% lie inside a single tag run, where counting is trivial; its lists take 7.3\,GB against 1.47\,GB.
On both references, counting is a small fraction of the time per read.

\subsection{LCAs without locating: the rz-index}
\label{sec:rz}

When each match votes for an LCA, we only need the leftmost and rightmost genera containing it, which are the genera of its leftmost and rightmost occurrences in the text.
The rz-index finds these without locating any other occurrence.
It combines the first KATKA~\cite{GKL22}, which found the leftmost and rightmost occurrences of a $k$-mer with range-minimum queries on grids built from LZ77 parses, with the composite indexes of Belazzougui et al.~\cite{BCG+15,BCG+17}, which pair the RLBWT with the phrase boundaries of the LZ77 parse: an occurrence that crosses or ends at a phrase boundary (a \emph{primary} occurrence) is found with a two-dimensional grid whose points are the phrase boundaries, sorted by the text that follows them and by the reversed text that precedes them, and all other occurrences are copies of primary ones.
To query the grid for a pattern, they need the BWT interval of each of its suffixes, which backward search over the RLBWT of the text provides, and the interval of the reverse of each of its prefixes, for which they keep a second RLBWT, of the reversed text.
Since our index contains both strands, we do not need that second RLBWT: we sort the phrase boundaries by the reverse complement of the text preceding them, instead of its reverse, so the intervals we need are those of the reverse complements of the prefixes, which are in the same index.
Since the leftmost occurrence of a pattern is always primary, we can also replace range reporting with range-minimum queries on the grid.
A symmetric grid built on a right-to-left parse gives the rightmost occurrence.
For the rz-index, $S$ and its reverse are parsed with a left-referencing greedy LZ77 parse in which each phrase's source starts within the previous 32\,million characters.
The two grids are built on the phrase ends of the two parses, each as a wavelet matrix with a range-minimum structure on each level, and auxiliary structures (per-level union bitvectors and scan arrays) speed up the grid queries.

Table~\ref{tab:rz} compares the rz-index with Cliffy's LCA query.
With Cliffy's own matches (ZMPs), the rz-index is 4.4 to 9 times smaller than Cliffy's index, and on our older desktop it takes 1.6 to 3 times as long per read as Cliffy reports on its server (1.6 times with the explicit RLCSA).
Its accuracy is within a few points of Cliffy's on a different set of reads: 1 point higher on V1--V2 and 1 to 4 points lower on the other three regions, which suggests that the two read sets differ somewhat in difficulty, and which served as a check on our setup.
With MEMs found by BML and $L = 50$, the digested rz-index classifies a read in 66\,$\mu$s, faster than Cliffy reports for its digested LCA query on its server, with an index 4.4 times smaller and similar accuracy.

\begin{table}[t]
\centering
\caption{LCA classification: the rz-index with ZMPs (Cliffy's matches; scored on 1 million read pairs per region) and with BML MEMs, against Cliffy's LCA query.
Accuracy is at the genus level, with Cliffy's scoring (``theirs'') and strict.
Ranges are over the four regions and single values are means, and we did not compute strict accuracy for the ZMP runs.
The rz-index sizes include auxiliary structures that speed up its grid queries.
EF stands for Elias--Fano.
Cliffy's time is for approximate listing, which uses the same index as its LCA query.}
\label{tab:rz}
\small
\resizebox{\textwidth}{!}{\begin{tabular}{llrrcc}
\toprule
 & configuration & space (GB) & time ($\mu$s/read) & theirs & strict \\
\midrule
\multirow{3}{*}{\shortstack[l]{digested\\ZMPs}} & rz, RLBWT & 0.76 & 212 & \multirow{3}{*}{83.7--87.3\%} & \\
 & rz, EF RLCSA & 0.75 & 152 & & \\
 & rz, explicit RLCSA & 0.96 & 125 & & \\
\addlinespace
\multirow{3}{*}{\shortstack[l]{undigested\\ZMPs}} & rz, RLBWT & 0.99 & 522 & \multirow{3}{*}{84.8--88.6\%} & \\
 & rz, EF RLCSA & 1.01 & 594 & & \\
 & rz, explicit RLCSA & 1.39 & 330 & & \\
\addlinespace
digested MEMs & rz, explicit RLCSA, $L = 30$ & 0.96 & 131 & 85.7\% & 85.3\% \\
 & rz, explicit RLCSA, $L = 50$ & 0.96 & 66 & 86.7\% & 84.8\% \\
undigested MEMs & rz, explicit RLCSA, $L = 30$ & 1.39 & 262 & 87.6\% & 87.0\% \\
 & rz, explicit RLCSA, $L = 50$ & 1.39 & 167 & 89.3\% & 85.5\% \\
\midrule
Cliffy LCA & minimizers & 4.27 & 77--82 & 86.3--88.5\% & \\
 & undigested & 9.01 & 199--204 & 87.5--90.2\% & \\
\bottomrule
\end{tabular}}
\end{table}

The rz-index is not on our Pareto surface, however, because with tag arrays listing costs little more than computing the LCA and is about 3.5 points more accurate with even credit, and more with proportional credit.
Computing the LCA from tags, as the smallest and largest genus listed, also gives exactly the same answers as the rz-index, about three times faster: over digests, with $L = 30$ and an explicit RLCSA, about 40\,$\mu$s per read against 131\,$\mu$s.
The rz-index may still be useful where listing is not an option, for example when the number of documents is so large that listing them is too expensive.

\subsection{Verifying digested matches}

Undigested MEMs are more accurate than digested ones, so we tried to get the best of both by verifying each occurrence of a digested MEM against the DNA and extending it to an exact match, using a Relative Lempel--Ziv representation of the database (with the longest sequence of each genus as the reference) and the positions of the minimizers.
These structures took 0.24\,GB, and verification worked, but digested MEMs are less specific than DNA MEMs (with $L = 30$, they occur in 6.0 genera on average, against 3.8), and checking their occurrences took 850\,$\mu$s per read, against 125 to 253\,$\mu$s for classifying with an undigested sr-index on an explicit RLCSA with a lookup table ($L$ from 30 down to 15).

\subsection{Combining LCAs and listing}

Listing is most expensive for MEMs with many occurrences, and for such MEMs the LCA may lose little information, so we tried answering by LCA (with the rz-index's grids) whenever a MEM has more than $K$ occurrences, or whenever its occurrences cover a fraction at least $\alpha$ of the sequences under its LCA, and listing (with the sr-index) otherwise.
With BML, few MEMs have many occurrences: the threshold $K$ saved 25 to 45\% of the time over digests, but required the grids (0.42\,GB more), and the rule based on $\alpha$ sent fewer than 1\% of the MEMs to the LCA while needing the LCA of every MEM to decide.
With tag arrays, listing is as fast as computing the LCA anyway.

\subsection{Filtering \texorpdfstring{$k$}{k}-mers instead of digesting}

KeBaB~\cite{BDZ+25} first breaks each read into \emph{pseudo-MEMs}, the maximal substrings of length at least $L$ all of whose $k$-mers pass a Bloom filter of the database's $k$-mers ($k < L$).
Every MEM of length at least $L$ lies in a pseudo-MEM, so the rest of the read can be skipped without changing the answers.
Since this keeps the accuracy of undigested matches, we asked whether it could give us the speed of digests instead.
To bound what any filter could achieve, we used exact $k$-mer membership at no cost before undigested BML (listing with tags on the explicit RLCSA).
The answers were identical, but 94 to 97\% of the bases remained in pseudo-MEMs, and the time per read dropped only from 92--99 to 72--83\,$\mu$s for $L = 30$ ($k = 20$ or $25$) and by 7 to 9\% for $L = 15$ ($k = 12$), against 42\,$\mu$s with digests and an index a third of the size (all without lookup tables).
Almost every $k$-mer of our short, accurate reads occurs in the database, and BML spends its time extending the long MEMs themselves.
KeBaB is designed for long, error-prone reads.
Brown et al.\ obtained their largest speedups by searching only each read's longest pseudo-MEMs, which risks discarding all of its long MEMs, since a long pseudo-MEM need not contain a long MEM: in their pipeline the pseudo-MEMs were chosen before any was searched.
If finding pseudo-MEMs is integrated with searching them, this risk disappears: we can search the pseudo-MEMs in decreasing order of length and stop only when we have found as many MEMs as we want that are longer than the next pseudo-MEM~\cite{BDZ+25}, or use parse indexing to bound the lengths of the MEMs that pseudo-MEMs must contain~\cite{Gag26}.

\subsection{Matching whole phrases}

A digest symbol names only its minimizer, not the bases up to the next one.
A one-level parse index~\cite{HOK+24,BGM+25,Gag26} instead replaces each phrase of a locally consistent parse, such as prefix-free parsing~\cite{BGK+19} or a parse into phrases starting at minimizers, by an identifier, such as its Karp--Rabin hash, and parses the read the same way, so that matching whole phrases is exact, at about the cost of matching digests.
Only the ends of each MEM, up to about a phrase on each side, are lost.
To estimate that loss, we trimmed each MEM found with $L = 30$ by the undigested index to the part a phrase index with minimizer phrases ($k = 4$, $w = 11$) would certainly match (18.5 bases removed per MEM, on average) and listed the genera of the trimmed strings.
On 20\,000 read pairs from each of V1--V2 and V4, strict accuracy dropped by 2.3 and 3.3 points, more than digestion costs, and trimming just 2 bases from each end already cost 0.3 to 0.5 points: the ends of a MEM are where the read diverges from all but a few genera, and the middles of our MEMs occur in 219 genera on average, against 4 for the whole MEMs.
A one-level phrase index is therefore no substitute for digests on SILVA, although it might be on datasets whose matches carry less of their information at their ends.

\end{document}